\newtheorem{definition}{Definition}
\newtheorem{proposition}[definition]{Proposition}
\newtheorem{lemma}[definition]{Lemma}
\newtheorem{theorem}[definition]{Theorem}
\newtheorem{corollary}[definition]{Corollary}
\newtheorem{conjecture}[definition]{Conjecture}
\newtheorem{remark}[definition]{Remark}
\newtheorem{example}[definition]{Example}
\newtheorem{question}[definition]{Question}
\def\squareforqed{\hbox{\rlap{$\sqcap$}$\sqcup$}}
\def\qed{\ifmmode\squareforqed\else{\unskip\nobreak\hfil
\penalty50\hskip1em\null\nobreak\hfil\squareforqed
\parfillskip=0pt\finalhyphendemerits=0\endgraf}\fi}
\def\endenv{\ifmmode\;\else{\unskip\nobreak\hfil
\penalty50\hskip1em\null\nobreak\hfil\;
\parfillskip=0pt\finalhyphendemerits=0\endgraf}\fi}
\newenvironment{proof}{\noindent \textbf{{Proof.~} }}{\qed}
\def\Dbar{\leavevmode\lower.6ex\hbox to 0pt
{\hskip-.23ex\accent"16\hss}D}
\def\url@leostyle{%
  \@ifundefined{selectfont}{\def\UrlFont{\sf}}{\def\UrlFont{\small\ttfamily}}}
\def\bcj{\begin{conjecture}}
\def\ecj{\end{conjecture}}
\def\bcr{\begin{corollary}}
\def\ecr{\end{corollary}}
\def\bd{\begin{definition}}
\def\ed{\end{definition}}
\def\bea{\begin{eqnarray}}
\def\eea{\end{eqnarray}}
\def\bem{\begin{enumerate}}
\def\eem{\end{enumerate}}
\def\bex{\begin{example}}
\def\eex{\end{example}}
\def\bim{\begin{itemize}}
\def\eim{\end{itemize}}
\def\bl{\begin{lemma}}
\def\el{\end{lemma}}
\def\bma{\begin{bmatrix}}
\def\ema{\end{bmatrix}}
\def\bpf{\begin{proof}}
\def\epf{\end{proof}}
\def\bpp{\begin{proposition}}
\def\epp{\end{proposition}}
\def\bqu{\begin{question}}
\def\equ{\end{question}}
\def\br{\begin{remark}}
\def\er{\end{remark}}
\def\bt{\begin{theorem}}
\def\et{\end{theorem}}
\def\btb{\begin{tabular}}
\def\etb{\end{tabular}}
\newcommand{\nc}{\newcommand}
\def\a{\alpha}
\def\b{\beta}
\def\t{\theta}
\def\m{\mu}
\def\n{\nu}
\def\r{\rho}
\def\s{\sigma}
\def\ps{\psi}
\def\G{\Gamma}
 \nc{\bbA}{\mathbb{A}} \nc{\bbB}{\mathbb{B}} \nc{\bbC}{\mathbb{C}}
 \nc{\bbD}{\mathbb{D}} \nc{\bbE}{\mathbb{E}} \nc{\bbF}{\mathbb{F}}
 \nc{\bbG}{\mathbb{G}} \nc{\bbH}{\mathbb{H}} \nc{\bbI}{\mathbb{I}}
 \nc{\bbJ}{\mathbb{J}} \nc{\bbK}{\mathbb{K}} \nc{\bbL}{\mathbb{L}}
 \nc{\bbM}{\mathbb{M}} \nc{\bbN}{\mathbb{N}} \nc{\bbO}{\mathbb{O}}
 \nc{\bbP}{\mathbb{P}} \nc{\bbQ}{\mathbb{Q}} \nc{\bbR}{\mathbb{R}}
 \nc{\bbS}{\mathbb{S}} \nc{\bbT}{\mathbb{T}} \nc{\bbU}{\mathbb{U}}
 \nc{\bbV}{\mathbb{V}} \nc{\bbW}{\mathbb{W}} \nc{\bbX}{\mathbb{X}}
 \nc{\bbZ}{\mathbb{Z}}
 \nc{\bA}{{\bf A}} \nc{\bB}{{\bf B}} \nc{\bC}{{\bf C}}
 \nc{\bD}{{\bf D}} \nc{\bE}{{\bf E}} \nc{\bF}{{\bf F}}
 \nc{\bG}{{\bf G}} \nc{\bH}{{\bf H}} \nc{\bI}{{\bf I}}
 \nc{\bJ}{{\bf J}} \nc{\bK}{{\bf K}} \nc{\bL}{{\bf L}}
 \nc{\bM}{{\bf M}} \nc{\bN}{{\bf N}} \nc{\bO}{{\bf O}}
 \nc{\bP}{{\bf P}} \nc{\bQ}{{\bf Q}} \nc{\bR}{{\bf R}}
 \nc{\bS}{{\bf S}} \nc{\bT}{{\bf T}} \nc{\bU}{{\bf U}}
 \nc{\bV}{{\bf V}} \nc{\bW}{{\bf W}} \nc{\bX}{{\bf X}}
 \nc{\bZ}{{\bf Z}}
\nc{\cA}{{\cal A}} \nc{\cB}{{\cal B}} \nc{\cC}{{\cal C}}
\nc{\cD}{{\cal D}} \nc{\cE}{{\cal E}} \nc{\cF}{{\cal F}}
\nc{\cG}{{\cal G}} \nc{\cH}{{\cal H}} \nc{\cI}{{\cal I}}
\nc{\cJ}{{\cal J}} \nc{\cK}{{\cal K}} \nc{\cL}{{\cal L}}
\nc{\cM}{{\cal M}} \nc{\cN}{{\cal N}} \nc{\cO}{{\cal O}}
\nc{\cP}{{\cal P}} \nc{\cQ}{{\cal Q}} \nc{\cR}{{\cal R}}
\nc{\cS}{{\cal S}} \nc{\cT}{{\cal T}} \nc{\cU}{{\cal U}}
\nc{\cV}{{\cal V}} \nc{\cW}{{\cal W}} \nc{\cX}{{\cal X}}
\nc{\cZ}{{\cal Z}}
\nc{\hA}{{\hat{A}}} \nc{\hB}{{\hat{B}}} \nc{\hC}{{\hat{C}}}
\nc{\hD}{{\hat{D}}} \nc{\hE}{{\hat{E}}} \nc{\hF}{{\hat{F}}}
\nc{\hG}{{\hat{G}}} \nc{\hH}{{\hat{H}}} \nc{\hI}{{\hat{I}}}
\nc{\hJ}{{\hat{J}}} \nc{\hK}{{\hat{K}}} \nc{\hL}{{\hat{L}}}
\nc{\hM}{{\hat{M}}} \nc{\hN}{{\hat{N}}} \nc{\hO}{{\hat{O}}}
\nc{\hP}{{\hat{P}}} \nc{\hR}{{\hat{R}}} \nc{\hS}{{\hat{S}}}
\nc{\hT}{{\hat{T}}} \nc{\hU}{{\hat{U}}} \nc{\hV}{{\hat{V}}}
\nc{\hW}{{\hat{W}}} \nc{\hX}{{\hat{X}}} \nc{\hZ}{{\hat{Z}}}
\nc{\hn}{{\hat{n}}}
\def\dim{\mathop{\rm Dim}}
\def\lin{\mathop{\rm span}}
\def\max{\mathop{\rm max}}
\def\min{\mathop{\rm min}}
\def\rank{\mathop{\rm rank}}
\def\sr{\mathop{\rm sr}}
\def\tr{\mathop{\rm Tr}}
\def\dg{\dagger}
\def\ox{\otimes}
\def\ra{\rightarrow}
\def\sue{\subseteq}
\newcommand{\bra}[1]{\langle#1|}
\newcommand{\ket}[1]{|#1\rangle}
\newcommand{\ketbra}[2]{|#1\rangle\!\langle#2|}
\newcommand{\tbc}{\red{TO BE CONTINUED...}}
\newcommand{\red}{\textcolor{red}}
\def\Dbar{\leavevmode\lower.6ex\hbox to 0pt
{\hskip-.23ex\accent"16\hss}D}
\begin{document}
\title{Entanglement distillation in terms of Schmidt rank and matrix rank}

\author{Tianyi Ding}\email[]{ dingty@buaa.edu.cn}
\affiliation{LMIB and School of Mathematical Sciences, Beihang University, Beijing 100191, China}

\author{Lin Chen}\email[]{linchen@buaa.edu.cn(corresponding author)}
\affiliation{LMIB and School of Mathematical Sciences, Beihang University, Beijing 100191, China}
\affiliation{International Research Institute for Multidisciplinary Science, Beihang University, Beijing 100191, China}

\date{\today}

% \Large

\pacs{03.65.Ud, 03.67.Mn}

\begin{abstract}
Entanglement distillation is a key task in quantum-information processing. In this paper, we distill non-positive-partial-transpose (NPT) bipartite states of some given Schmidt rank and matrix rank. We show that all bipartite states of Schmidt rank two are locally equivalent to classical-classical states, and all bipartite states of Schmidt rank three are 1-undistillable. Subsequently, we show that low-rank B-irreducible NPT states are distillable for large-rank reduced density operators by proving low-rank B-irreducible NPT state whose range contains a product vector is distillable. Eventually, we present an equivalent condition to distill $M\times N$ bipartite states of rank $\max\{M,N\}+1$.
\end{abstract}

\maketitle

% \begin{center}
% Keywords: Distillability, partial tranpose, Schmidt rank, irreducibility
% \end{center}

%AMS classification: 05B20

%\tableofcontents

\section{Introduction}

\label{sec:int}
%介绍提纯问题背景
Entanglement distillation is a core problem in quantum information theory \cite{PhysRevA.59.4206}. It is the procedure of asymptotically transforming mixed entangled states to pure entangled states through local operations and classical communication (LOCC). An entangled state is distillable if the transformation works. Distillable entangled states are valuable resource in performing quantum information processing tasks including dense coding \cite{PhysRevLett.69.2881}, teleportation \cite{teleportation1993}, key agreement \cite{PhysRevLett.67.661,PhysRevLett.77.2818}, computational speedup \cite{365700} and quantum error correction \cite{PhysRevA.54.3824}. In contrast, undistillable entangled states cannot be used for quantum tasks directly \cite{2012Distillability}. Unfortunately, the quantum system interacts with the environment inevitably, thus pure states are easily converted into mixed states.
It has been shown that all positive partial transpose (PPT) states are nondistillable \cite{2000Evidence}. A long-standing open problem is whether bipartite non-positive partial transpose (NPT) states can be distilled \cite{2000Evidence,PhysRevA.61.062312}. 

%提纯问题研究现状
Recently, the distillability problem has been listed as one of the five open problems in quantum information theory \cite{horodecki2020open}. This problem is of high research value both theoretically and practically. From a mathematical point of view, the distillability problem is deeply related to the 2-decomposable map in operator algebra, the generalized Cauchy inequality conjecture, and Hilbert's 17th problem \cite{Hiroshima2010APO}. Although there have been some attempts to the distillability problem \cite{PhysRevA.68.022319,PhysRevA.78.022318,Horodecki_2009,Pankowski2007AFS}, the complete solution is unknown. The main difficulty is that it involves rapidly increasing number of parameters in the density matrices of states in high dimensions. 
The general approach to the distillability problem is starting with some simple cases to gain experience in solving the complicated cases. It has been shown that entangled states of rank two and three \cite{PhysRevA.78.022318, 2003Rank, PhysRevLett.78.574}, $2\times N$ NPT states, $M \times N$ entangled states of rank $N$ (where $M\le N$) \cite{2012Distillability,2003Rank}, and NPT states of rank at most four \cite{2016Distillability} are distillable. Further, the existence of PPT entangled states has been proven \cite{horodecki1997}, and the set of these states has a positive measure. 
The following two ways are possible to access the distillability problem. The first way is to determine whether bipartite states of certain ranks are distillable. The second way is to propose a more applicable index except matrix rank to determine distillability. For example, we research the distillability of NPT states in terms of Schmidt rank. The Schmidt rank is also used in the studies of unitary gates \cite{PhysRevA.93.042331}, mutually unbiased bases \cite{Chen2016ProductSA} and proving 0-entropy inequality in multipartite system \cite{Song2023ProofOA}.

%本文工作总述
In this paper we investigate the distillability of bipartite NPT states in terms of Schmidt rank and matrix rank. We show that all Schmidt-rank-two bipartite states are equivalent to classical-classical states in Lemma \ref{le:sr2=PPT}, and all Schmidt-rank-three bipartite states are 1-undistillable in Lemma \ref{thm:3xn=distill}. Secondly, we study the distillability of bipartite NPT states in the space $\bbC^M\ox\bbC^N$ of rank $N+1$. We propose a sufficient condition to the distillability in regard to the principal submatrix of negative determinant in Lemma \ref{le:kxk}. We prove that the 1-distillability of $M\times N$ NPT states of rank $N+1$ is equivalent to that of B-irreducibile $M\times N$ NPT states of rank $N+1$ in Lemma \ref{le:reducible}. Moreover we show that an $M\times N$ B-irreducible NPT state of rank $N+1$ whose range contains a product vector is 1-distillable in Theorem \ref{le:MxNrank(N+1)NPT=irreducible}. We prove that all bipartite B-irreducible $M\times N$ NPT states of rank $N+1$ are 1-distillable when $\min\{M,N\}>3$ in Lemma \ref{thm:MxNrank(N+1)NPT}. Further, we show that the 1-distillability of $M\times N$ NPT states of rank $N+1$ is equivalent to that of $3\times N$ NPT states of rank $N+1$. 

% application or connection to other branches or problems in quantum information
As far as we know, it is the first time to distill NPT states in terms of Schmidt rank. The Schmidt rank of states in a bipartite finite dimensional Hilbert space is a measure of entanglement \cite{2008On}.
For the distillability problem, the import of Schmidt rank helps make full use of local equivalence of the bipartite matrix in the two spaces respectively. The operator Schmidt rank has been found related to separability \cite{Cariello_2021}; the attainable bound of Schmidt rank of bipartite subspace has been proven by algebraic geometry techniques \cite{2008On}; the distillation of pure entangled state in bipartite system with certain Schmidt rank has been shown \cite{BISWAS2023128610}; and the extension of Schmidt rank to infinitely dimensional system is the monotone of SLOCC convertibility \cite{eps}. Moreover, many results for the distillability of low-matrix-rank states were found through constructing PPT states \cite{BDM+99}. Our results further extend the previous studies, and show the latest progress on the distillability problem.
% 3xN没有提纯的写法
% Moreover, we distill high dimensional NPT states in terms of matrix rank. With the help of B-irreducibility, expressions of density matrices are simplified, and the distillability of B-irreducible NPT states can be shown. 
% 3xN distillable 的写法
% Moreover, we distill arbitrary dimensional NPT states in terms of matrix rank. With the help of B-irreducibility, expressions of density matrices are simplified, and the distillability of B-irreducible NPT states can be shown.

% (related to the fifth problem) An important practical observation bridging between the
% fourth and the fifth problem is contained in the following theorem [136]: NPT nondistillable entanglement exists
% if and only if there exist NPT nondistillable entangled
% Werner states for local dimension d > 2. In the smallest
% dimension, d = 2, this is not the case.

%文章结构
The rest of this paper is organized as follows. In Sec. \ref{sec:pre} we introduce distillation related knowledge and mathematical notations in this paper. In Sec. \ref{subsec:sr} we introduce facts on Schmidt rank we used in this paper, and in Sec. \ref{subsec:blockform} we introduce facts on block form and irreducibility. In Sec. \ref{sec:sr} we determine the distillability of Schmidt rank two and three bipartite states respectively. 
In Sec. \ref{sec:rank} we present a necessary and sufficient condition for the distillability of low-rank entangled states.
In Sec. \ref{sec:con} we make a conclusion of this paper.

% \tbc \red{by Tianyi, who need firstly find more references related to entanglement distillation, distillability, PPT states et al}

\section{preliminaries}
\label{sec:pre}

In this section, we introduce the facts and knowledge used in this paper. First, we refer to an $m\times n$ bipartite state $\r_{AB}$ as an order-$m$ block matrix with each block of order $n$, such that $\rank\r_A=m$ and $\rank\r_B=n$, where $\r_A$ and $\r_B$ are respectively the reduced density operators of system A and B of $\r_{AB}$. For example, the $k\times k$ identity matrix $I_k$ is a non-normalized state. It is also known that $\r_A$ can be worked out by partial trace over system $B$ of $\r_{AB}$, i.e., $\r_A=\tr_B\r_{AB}=\sum^n_{j=1}\bra{j}_B\r_{AB}\ket{j}_B$. Next, if $\r_{AB}$ is written as $\r_{AB}=\sum^m_{i,j=1}\ketbra{i}{j}\otimes M_{ij}$, then its partial transpose is denoted as $\r_{AB}^\Gamma=\sum^m_{i,j=1}\ketbra{j}{i}\otimes M_{ij}$. If $\r_{AB}^\Gamma$ is not positive semidefinite, then we say that $\r_{AB}$ is a non-positive-partial-transpose (NPT) state. Otherwise, we say that $\r_{AB}$ is a positive-partial-transpose (PPT) state. For example, a typical family of PPT states is the so-called separable states defined as the convex sum of product states. Third, we say that two bipartite states $\a$ and $\b$ are locally equivalent when there is an invertible product matrix $S=M\otimes N$ such that $\a=S \b S^\dg$. In this case, we shall refer to $\a\sim\b$.
One can straightforwardly show that, two locally equivalent states are at the same time NPT (and PPT, separable, respectively). Fourth, a bipartite entangled state $\r$ is $n$-distillable when there is a Schmidt-rank-two state $\ket{\ps}_{AB}$, with $A=A_1...A_n$ and $B=B_1...B_n$, such that $\bra{\ps}(\r^{\Gamma})^{\otimes n}\ket{\ps}<0$, and the $j$'th matrix $\r^\Gamma$ acts on the bipartite space $\cH_{A_jB_j}$. We say that $\r$ is distillable when $\r$ is $n$-distillable for some $n$. It is not hard to see that, two locally equivalent states are at the same distillable or not. Further, if $\r$ is PPT then the inequality $\bra{\ps}(\r^{\Gamma})^{\otimes n}\ket{\ps}<0$ fails. So every PPT state is not distillable. However, whether every NPT state is distillable remains a long-standing open problem. 

The following fact in Lemma \ref{le:2xn=NPT}(i) is from \cite{1995Purification,2000Evidence,PhysRevLett.78.574}, the fact in Lemma \ref{le:2xn=NPT}(ii) is from \cite{2003Rank,2012Distillability}, the fact in Lemma \ref{le:2xn=NPT}(iii) is from \cite{2003Rank}, and the fact in Lemma \ref{le:2xn=NPT}(iv) is from \cite{2012Distillability,2000Operational,2012Properties}. 
% \red{Tianyi: Citation of 1(iv) need to be added.}

\begin{lemma}
\label{le:2xn=NPT}
(i) Every $2\times n$ NPT state is 1-distillable.

(ii) Every $m\times n$ NPT state of rank at most $\max\{m,n\}$ is 1-distillable.

(iii) Every $m\times n$ state of rank smaller than $\max\{m,n\}$ is NPT. 

(iv) Every $m\times n$ PPT state of rank $\max\{m,n\}$ is separable. Further, it is the convex sum of $\max\{m,n\}$ pure product states. In particular, if $m\le n$ then an $m\times n$ PPT state of rank $n$ can be written as $C^\dg C$ where $C=[C_0,...,C_{M-2},I]$ and all $C_j$'s are normal matrices which are simultaneously unitarily diagonalizable.  
\end{lemma}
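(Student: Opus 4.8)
The four items are classical facts, so the plan is to recall the standard arguments, proving (i) first, then (iii) and (iv) together, and finally deducing (ii) from (i) and the structural input of (iii)--(iv). For (i) I would argue straight from the definition: since $\r$ is NPT there is a vector $\ket\phi\in\bbC^2\ox\bbC^n$ with $\bra\phi\r^\Gamma\ket\phi<0$, and because $\dim\cH_A=2$ the Schmidt rank of $\ket\phi$ is at most $2$. It cannot be $1$, for if $\ket\phi=\ket a\ox\ket b$ then, using $\tr(\proj\phi\,\r^\Gamma)=\tr((\proj\phi)^\Gamma\r)$ together with $(\proj\phi)^\Gamma=\proj{\bar a\ox b}$, we get $\bra\phi\r^\Gamma\ket\phi=\bra{\bar a\ox b}\r\ket{\bar a\ox b}\ge0$ since $\r\ge0$, a contradiction. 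Hence $\ket\phi$ has Schmidt rank exactly $2$, and a single copy with $A=A_1$, $B=B_1$, $\ket\ps:=\ket\phi$ certifies $1$-distillability.

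For (iii) and (iv) I would use the structure theory of low-rank PPT states. The basic step is the constructive subtraction of product vectors: if $\ket a\ox\ket b\in\mathop{\rm range}\r$ and $\ket{\bar a}\ox\ket b\in\mathop{\rm range}\r^\Gamma$, then $\r-\lambda\,\proj a\ox\proj b$ remains PPT for a maximal $\lambda>0$ and has strictly smaller rank. For (iv), with $\r$ an $m\times n$ PPT state and $m\le n$, so $\rank\r=\rank\r_B=n$, the low-rank separability analysis of \cite{2000Operational} guarantees that while the remaining state still has this rank such a product vector exists in its range, so after $n$ subtractions $\r$ is exhibited as a sum of $n$ pure product states. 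The normal form $\r=C^\dagger C$ with $C=[C_0,\dots,C_{m-2},I]$ and the $C_j$ simultaneously unitarily diagonalizable then drops out of choosing the basis on $B$ in which the last $n\times n$ block of a square root of $\r$ equals $I_n$ and reading off what $\r\ge0$ and $\r^\Gamma\ge0$ impose on the other blocks --- a routine matrix computation. For (iii), i.e.\ that a PPT $m\times n$ state has $\rank\r\ge\max\{m,n\}$ (equivalently, any state of smaller rank, such as a maximally entangled state, is NPT), I would again reduce by subtraction to an edge state and apply the edge-state rank inequality, using that a separable state already has rank at least $\max\{\rank\r_A,\rank\r_B\}$ (apply a generic $\bra a\ox I$ to its range) and that $(\r^\Gamma)_A=\r_A^{\,T}$, $(\r^\Gamma)_B=\r_B$; this is essentially the argument of \cite{2003Rank}.

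For (ii), under $m\le n$ the hypothesis forces $\rank\r=n$. The engine is the identity $(P\ox I)\,\r^\Gamma\,(P\ox I)=\big[(P^{T}\ox I)\,\r\,(P^{T}\ox I)\big]^{\Gamma}$ for a projector $P$ on $A$: if some two-dimensional $V\su\bbC^m$ makes the $A$-compression $(P_V\ox I)\,\r\,(P_V\ox I)$ NPT, then a negative-eigenvalue eigenvector of $(P_V^{T}\ox I)\,\r^\Gamma\,(P_V^{T}\ox I)=\big[(P_V\ox I)\r(P_V\ox I)\big]^{\Gamma}$ lies in a two-dimensional $A$-subspace, has Schmidt rank at most $2$, is not a product vector (else the form would be $\ge0$), and therefore witnesses $1$-distillability of $\r$. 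For $m=2$ one takes $V=\bbC^2$ and this is just (i). So the content of (ii) is that for $m\ge3$ a rank-$n$ NPT state cannot have \emph{every} two-dimensional $A$-compression PPT (hence separable), and this is the step I expect to be the main obstacle. I would attack it by normalizing $\r_B=I_n$, so that $\r$ is encoded by $n\times n$ matrices $V_1,\dots,V_m$ with $\sum_i V_iV_i^\dagger=I_n$, $\r=(V_iV_j^\dagger)_{ij}$ and $\r^\Gamma=(V_jV_i^\dagger)_{ij}$, and showing that ``every two-dimensional $A$-compression separable'' would propagate the normality/commutation relations of (iv) to the full block matrix, forcing $\r$ PPT --- a contradiction; alternatively one imports the explicit Schmidt-rank-two witness of \cite{2003Rank,2012Distillability}. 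Items (iii) and (iv) are standard though not trivial (the edge-state rank inequalities are the real work there), whereas (ii) is where that structural information has to be turned into an actual negativity witness.
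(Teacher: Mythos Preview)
The paper does not prove this lemma at all: it is stated as a collection of known facts with citations to \cite{1995Purification,2000Evidence,PhysRevLett.78.574,2003Rank,2012Distillability,2000Operational,2012Properties}, and no argument is given in the paper itself. So there is no ``paper's own proof'' to compare against; your proposal is really a sketch of the proofs in those references.

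Your argument for (i) is complete and correct, and is exactly the standard one. Your outlines for (iii) and (iv) correctly identify the key mechanism (iterated subtraction of product vectors / reduction to edge states, as in \cite{2000Operational,2003Rank,2012Properties}), and the normal form in (iv) does indeed fall out of choosing the $B$-basis so that the last block is $I_n$ and then reading off normality and simultaneous diagonalizability from the $2\times 2$ PPT constraints on pairs of blocks. For (ii) you are right that the crux is ruling out the possibility that \emph{every} two-dimensional $A$-compression of a rank-$n$ NPT $m\times n$ state is PPT; the arguments in \cite{2003Rank,2012Distillability} achieve this not quite by the ``propagate commutation relations'' route you sketch, but rather by exploiting the range structure (product vectors in $\cR(\r)$ and $\ker\r$ forced by the rank condition) to exhibit an explicit $2\times n$ NPT compression --- closer in spirit to the techniques the present paper uses in Section~\ref{sec:rank}. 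Your compression approach would also work once the details are filled in, but it is not the literature's path.
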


In the rest of this section, we review the notions and facts on Schmidt rank, block form and irreducibility of a bipartite state. They are respectively presented in Sec. \ref{subsec:sr} and \ref{subsec:blockform}.

\subsection{Schmidt rank}
\label{subsec:sr}

We say that two matrices $A$ and $B$ are orthogonal in the sense that $\tr(A^\dg B)=0$. The following fact is from \cite{000266183600001}.

\begin{lemma}
\label{le:HermitianDECOMP}
Suppose $\r$ is an $m\times n$ state. Then $\r=\sum_j A_j \otimes B_j$, where $A_j$'s are nonzero orthogonal order-$m$ Hermitian matrices, and $B_j$'s are nonzero orthogonal order-$n$ Hermitian matrices.    
\end{lemma}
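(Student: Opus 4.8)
The plan is to prove this as the Hermitian version of the operator Schmidt decomposition, obtained by running a real singular value decomposition inside the real Hilbert space of Hermitian matrices. First I would record that $\r$, being a state, is positive semidefinite and in particular Hermitian. Then fix a basis $E_1,\dots,E_{m^2}$ of the real vector space of order-$m$ Hermitian matrices that is orthonormal for the Hilbert--Schmidt inner product $\langle X,Y\rangle=\tr(X^\dg Y)$ (for instance the suitably normalized generalized Gell-Mann matrices together with a multiple of $I_m$), and similarly a basis $F_1,\dots,F_{n^2}$ of the order-$n$ Hermitian matrices.

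Next I would check that $\{E_i\ox F_j\}_{i,j}$ is an orthonormal basis over $\bbR$ of the real space of order-$mn$ Hermitian matrices: each $E_i\ox F_j$ is Hermitian, orthonormality follows from $\tr\big((E_i\ox F_j)^\dg(E_{i'}\ox F_{j'})\big)=\tr(E_i^\dg E_{i'})\tr(F_j^\dg F_{j'})=\d_{ii'}\d_{jj'}$, and the count of $m^2n^2$ orthonormal vectors matches the dimension $(mn)^2$ of that space, so they span it. Since $\r$ is Hermitian it therefore has a unique expansion $\r=\sum_{i,j}c_{ij}\,E_i\ox F_j$ with all $c_{ij}\in\bbR$.

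Then I would apply the real singular value decomposition to the matrix $C=(c_{ij})\in\bbR^{m^2\times n^2}$, say $C=U\S V^{T}$ with $U,V$ real orthogonal and $\S$ diagonal with entries $\s_1\ge\dots\ge\s_r>0$ and the rest zero. Put $A_k=\sqrt{\s_k}\sum_i U_{ik}E_i$ and $B_k=\sqrt{\s_k}\sum_j V_{jk}F_j$ for $k=1,\dots,r$. Each $A_k,B_k$ is a real combination of Hermitian matrices, hence Hermitian, and nonzero because $\s_k>0$. Expanding, $\sum_k A_k\ox B_k=\sum_{i,j}\big(\sum_k\s_kU_{ik}V_{jk}\big)E_i\ox F_j=\sum_{i,j}c_{ij}E_i\ox F_j=\r$. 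Orthogonality of the factors comes from orthogonality of the columns of $U$ and $V$ together with orthonormality of the $E_i$ and $F_j$: $\tr(A_k^\dg A_l)=\sqrt{\s_k\s_l}\sum_{i,i'}U_{ik}U_{i'l}\tr(E_i^\dg E_{i'})=\sqrt{\s_k\s_l}\sum_i U_{ik}U_{il}=\s_k\d_{kl}$, and similarly $\tr(B_k^\dg B_l)=\s_k\d_{kl}$.

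I do not expect a genuine obstacle; the only points that need care are the verification that real combinations of the $E_i\ox F_j$ exhaust the order-$mn$ Hermitian matrices (equivalently, that the $c_{ij}$ are real), and the bookkeeping that discarding the zero singular values still leaves the $A_j,B_j$ nonzero while preserving $\r=\sum_j A_j\ox B_j$. I would also note that one should \emph{not} try to derive this by taking Hermitian parts in the ordinary complex operator Schmidt decomposition of $\r$, since passing to Hermitian parts need not preserve orthogonality of the tensor factors; carrying out the singular value decomposition from the start inside the real Hilbert space of Hermitian matrices is what makes everything fall out cleanly.
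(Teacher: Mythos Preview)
Your argument is correct: expanding $\r$ in a real orthonormal tensor basis of Hermitian matrices and then applying the real singular value decomposition to the coefficient matrix is exactly the standard way to obtain the Hermitian operator Schmidt decomposition, and all the verifications you outline (reality of the $c_{ij}$, Hermiticity and nonvanishing of $A_k,B_k$, orthogonality via orthogonality of the columns of $U$ and $V$) go through as written.

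Note, however, that the paper does not supply its own proof of this lemma; it simply quotes the result from an external reference. So there is no ``paper's proof'' to compare against. Your write-up is a clean, self-contained derivation of the cited fact, and your closing caveat about not trying to extract this from the complex operator Schmidt decomposition by taking Hermitian parts is a good point to record.
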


Let $\bbM_{a,b}(\bbC)$ denote the set of $a\times b$ matrices. It is known that the \textit{Schmidt rank} of a bipartite matrix $M$ in the space $\bbM_{a,b}(\bbC)\otimes \bbM_{c,d}(\bbC)$ is the minimum integer $k$ admitting the decomposition $M=\sum^k_{j=1} A_j \otimes B_j$ with $A_j\in \bbM_{a,b}(\bbC)$ and $B_j\in \bbM_{c,d}(\bbC)$.
If $k$ is exactly the Schmidt rank of $M$ then the decomposition is referred to as a Schmidt decomposition of $M$, and we denote $k=\sr(M)$. It is not hard to verify that the decomposition is a Schmidt decomposition of $M$ if and only if $A_j$'s are linearly independent, and 
$B_j$'s are also linearly independent. For example, 
the decomposition in Lemma \ref{le:HermitianDECOMP} is a Schmidt decomposition of $\rho$, and the number of matrices $A_j$'s in the lemma is the Schmidt rank of $\r$.
For convenience, we shall refer to \textit{the space A (or B)} of the bipartite matrix $M$ as the matrix subspace spanned by $A_j$'s (or $B_j$'s) in the Schmidt decomposition $M=\sum^k_{j=1} A_j \otimes B_j$. Using these notions, one can straightforwardly verify the following claim. 
\begin{lemma}
 \label{le:constantSPACEA+B} 
(i) Every bipartite matrix $M$ has constant space $A$ and $B$. They have the same dimension. 

(ii) If $F_1,..,F_s$ are linearly independent matrices in the space $A$ of $M$, then we can find a Schmidt decomposition of $M$ such that $M=\sum^s_{j=1}F_j\otimes B_j+\sum^{\sr(M) }_{j=s+1}A_j\otimes B_j$. 

(iii) If $M$ is a bipartite state and $F_1,..,F_s$ are linearly independent Hermitian matrices in the space $A$ of $M$, then we can find a Schmidt decomposition of $M$ such that $M=\sum^s_{j=1}F_j\otimes B_j+\sum^{\sr(M) }_{j=s+1}A_j\otimes B_j$, where all $A_j$ and $B_j$ are Hermitian matrices. 

(iv) If $M$ is a bipartite state then the reduced density operator of system $A$ is in the space $A$ of $M$.
\end{lemma}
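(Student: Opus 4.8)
The plan is to first establish an intrinsic, decomposition-free description of the space $A$ and the space $B$ of a bipartite matrix $M$, and then obtain (i)--(iv) as consequences. Regard $M$ as an element of $V\otimes W$ with $V=\bbM_{a,b}(\bbC)$ and $W=\bbM_{c,d}(\bbC)$. For a linear functional $\ell$ on $W$ write $M_\ell:=(\mathrm{id}\otimes\ell)(M)\in V$, and set $\cV_A:=\{M_\ell:\ell\in W^*\}$, a subspace of $V$ attached to $M$ with no reference to any decomposition; define $\cV_B\subseteq W$ symmetrically. If $M=\sum_{j=1}^{k}A_j\otimes B_j$ is a Schmidt decomposition with $k=\sr(M)$, then $M_\ell=\sum_j\ell(B_j)A_j$, so $\cV_A\subseteq\lin\{A_1,\dots,A_k\}$; conversely, since the $B_j$ are linearly independent we may choose $\ell_1,\dots,\ell_k\in W^*$ with $\ell_i(B_j)=\delta_{ij}$, giving $M_{\ell_i}=A_i$ and hence $\lin\{A_1,\dots,A_k\}\subseteq\cV_A$. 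Thus the space $A$ equals $\cV_A$ for every Schmidt decomposition and has dimension $k$ (the $A_j$ being linearly independent); the symmetric argument shows the space $B$ equals $\cV_B$ and also has dimension $k$, which is (i). Part (iv) is then immediate: for a state $M$ with Schmidt decomposition $M=\sum_j A_j\otimes B_j$, the trace functional lies in $W^*$, so $\r_A=\tr_B M=\sum_j(\tr B_j)A_j=M_{\tr}\in\cV_A$.

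For (ii), part (i) tells us the space $A$ has dimension $k$, so $F_1,\dots,F_s$ extend to a basis $F_1,\dots,F_k$ of the space $A$. Pick any Schmidt decomposition $M=\sum_{j=1}^k A_j'\otimes B_j'$; since $\{A_j'\}$ and $\{F_i\}$ are both bases of the space $A$ there is an invertible matrix $T=(T_{ij})$ with $A_j'=\sum_i T_{ij}F_i$, whence
\[
M=\sum_{j=1}^k\Big(\sum_{i=1}^k T_{ij}F_i\Big)\otimes B_j'=\sum_{i=1}^k F_i\otimes\widetilde B_i,\qquad \widetilde B_i:=\sum_{j=1}^k T_{ij}B_j' .
\]
Because $T$ is invertible and the $B_j'$ are linearly independent, the $\widetilde B_i$ are linearly independent, so this is a Schmidt decomposition of $M$; relabelling the terms with $i=s+1,\dots,k$ as $A_j\otimes B_j$ yields the claimed form.

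For (iii), note first that when $M$ is a state the space $A$ is closed under the adjoint: fixing a basis $\{e_\alpha\}$ of $W$ made of Hermitian matrices and writing $M=\sum_\alpha C_\alpha\otimes e_\alpha$, the equality $M=M^\dg$ forces every $C_\alpha$ to be Hermitian, so the space $A$ is the complex span of Hermitian matrices. Hence the Hermitian elements of the space $A$ form a real vector space of real dimension $k$, which contains the Hermitian, linearly independent $F_1,\dots,F_s$; extend these to a real basis $F_1,\dots,F_k$ of that space, which is simultaneously a complex basis of the space $A$. Now repeat the computation of (ii) starting from a Hermitian Schmidt decomposition $M=\sum_j A_j'\otimes B_j'$, which exists by Lemma \ref{le:HermitianDECOMP}; we get $M=\sum_i F_i\otimes\widetilde B_i$ with the $\widetilde B_i$ linearly independent, and comparing this with $M=M^\dg=\sum_i F_i\otimes\widetilde B_i^\dg$ and using linear independence of the $F_i$ forces $\widetilde B_i=\widetilde B_i^\dg$. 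Thus every $A_j(=F_j)$ and every $B_j(=\widetilde B_j)$ is Hermitian, proving (iii).

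The real content is part (i); once the intrinsic identification of the space $A$ with $\cV_A$ is in place, (ii)--(iv) are short. The main point to be careful about, and the step I expect to be the trickiest, is precisely this identification: one must pair $W$ with its algebraic dual $W^*$ (equivalently, fix bases) rather than use the Hermitian inner product on matrix space, which is conjugate-linear and would spoil the argument. The only further subtlety is in (iii), where the change of basis must be kept inside the real span of Hermitian matrices; this is automatic once we observe that the space $A$ of a state is spanned by Hermitian matrices.
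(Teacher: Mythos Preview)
Your proof is correct. The paper does not actually supply a proof of this lemma; it merely states that ``one can straightforwardly verify the following claim,'' so there is no argument to compare against. Your approach---identifying the space $A$ intrinsically as the image of the partial contraction map $\ell\mapsto(\mathrm{id}\otimes\ell)(M)$ and then deducing (ii)--(iv) from this---is the standard and cleanest way to justify the claim, and it fills in exactly the details the paper omits. One small remark: in (iii) you assert that a real basis of the Hermitian part of the space $A$ is automatically a complex basis of the space $A$; this is true (because the space $A$, being adjoint-closed, equals the complexification of its Hermitian part), but you might make that step explicit since it is what forces the extended $F_{s+1},\dots,F_k$ to remain linearly independent over~$\bbC$.
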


The following result relatd to Schmidt rank is from \cite{2008On}.
 \bl
\label{le:maximumdim}
(i) The maximum dimension of a subspace $\cS \sue \cH_A \ox \cH_B=\bbC^M\otimes\bbC^N$
is given by
$(M-k+1)(N-k+1)$, when
$\sr(\ket{v})\ge k$ for all nonzero $\ket{v} \in \cS$.

(ii) If a subspace $\cV \sue \cH_A \ox \cH_B=\bbC^M\otimes\bbC^N$ has dimension $\dim\cV  (M-1)(N-1)+1$, then $\cV$ contains infinitely many product vectors.
 \el
 It can be deduced that if $\cU$ is a subspace of $\cH_A \ox\cH_B=\bbC^M\otimes\bbC^N$ and $\dim \cU > (M-k+1)(N-k+1)$, then there exists $\ket{u}\in \cU$, such that $\sr(\ket{u})<k$.
For example, the lemma says that every $M\times N$ subspace of dimension at least $(M-1)(N-1)+1$ has a product vector. Further, every $M\times N$ subspace of dimension at least $(M-2)(N-2)+1$ has a bipartite vector of Schmidt rank two.

We shall apply the facts of this subsection in Sec. \ref{sec:sr}. 
 
\subsection{block form and irreducible state}
\label{subsec:blockform}

We will denote by $\{\ket{i}_A:i=0,\ldots,M-1\}$ and
$\{\ket{j}_B:j=0,\ldots,N-1\}$ as the orthonormal bases of $\cH_A$ and $\cH_B$,
respectively. Let $\cH=\cH_A\ox\cH_B$. The subscripts A and B will be often omitted. Any
state $\r$ of rank $r$ can be represented as
 \bea
 \label{ea:MxN-State}
\r=\sum_{i,j=0}^{M-1} \ketbra{i}{j}\ox C_i^\dag C_j,
 \eea
where the $C_i$ are $R\times N$ matrices and $R$ is an arbitrary
integer $\ge r$. In particular, one can take $R=r$. We shall often
consider $\r$ as a block matrix $\r=C^\dag C=[C_i^\dag C_j]$, where
$C=[C_0~C_1~\cdots~C_{M-1}]$ is an $R\times MN$ matrix. Thus
$C_i^\dag C_j$ is the matrix of the linear operator $\bra{i}_A \r
\ket{j}_A$ acting on $\cH_B$. For the reduced density matrices, we
have the formulae
 \bea \label{RedStates}
\r_B=\sum_{i=0}^{M-1} C_i^\dag C_i; \quad \r_A=[\tr C_i^\dag C_j],
\quad i,j=0,\ldots,M-1.
 \eea

One can verify that the range of $\r$ is the  space spanned by the column vectors of
the matrix $C^\dag$ and that
 \bea \label{JezgroRo}
\ker\r=\left\{\sum_{i=0}^{M-1}\ket{i}\ox\ket{y_i}:
\sum_{i=0}^{M-1}C_i\ket{y_i}=0\right\}.
 \eea
In particular, if $C_i\ket{j}=0$ for some $i$ and $j$ then
$\ket{i,j}\in\ker\r$.

Next we introduce the concept of irreducibility for bipartite states.

 \bd
\label{df:red} 
We say that a linear operator
$\r:\cH\to\cH$  is an {\em A-direct sum} of linear operators
$\r_1:\cH\to\cH$ and $\r_2:\cH\to\cH$, and we write
$\r=\r_1\oplus_A\r_2$, if
$\cR(\r_A)=\cR((\r_1)_A)\oplus\cR((\r_2)_A)$. A bipartite state $\r$
is {\em A-reducible} if it is an A-direct sum of two states;
otherwise $\r$ is {\em A-irreducible}. One defines similarly the
{\em B-direct sum} $\r=\r_1\oplus_B\r_2$, the {\em B-reducible} and
the {\em B-irreducible} states. We say that a state $\r$ is {\em
reducible} if it is either A or B-reducible. We say that $\r$ is
{\em irreducible} if it is not reducible. We write
$\r=\r_1\oplus\r_2$ if $\r=\r_1\oplus_A\r_2$ and
$\r=\r_1\oplus_B\r_2$, and in that case we say that $\r$ is a {\em
direct sum} of $\r_1$ and $\r_2$.
 \ed

The following fact is from \cite{2012Distillability}.
\begin{lemma}
\label{le:irreducible}
    Let $\r$ be an B-irreducible $M\times N$ state such that $\cH'_A\ox \ket{b}\subseteq \ker\r$ for an $(M-1)$-dimensional subspace $\cH'_A\subseteq\cH_A$ and some state $\ket{b}\in\cH_B$. Then $\r$ is 1-distillable.
\end{lemma}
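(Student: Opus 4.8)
The plan is to reduce to the block form $\r=C^{\dg}C$ by a local equivalence, then argue by cases according to whether $C_j^{\dg}C_0\ket b$ vanishes for every relevant index $j$. First I would apply a unitary on $\cH_A$ so that $\cH'_A=\lin\{\ket1,\dots,\ket{M-1}\}$; this alters neither distillability nor $\r_B$, hence not B-irreducibility. Writing $\r=C^{\dg}C=[C_i^{\dg}C_j]_{i,j=0}^{M-1}$ in the block form recalled in Sec.~\ref{subsec:blockform}, the hypothesis $\cH'_A\ot\ket b\subseteq\ker\r$ becomes $C_j\ket b=0$ for $j=1,\dots,M-1$, and since $\r_B=\sum_i C_i^{\dg}C_i$ is invertible we get $\ket w:=C_0\ket b\neq0$. (I assume $M\ge2$; otherwise the statement is vacuous.)

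I would next dispose of the easy case in which $C_j^{\dg}C_0\ket b\neq0$ for some $j\in\{1,\dots,M-1\}$. Put $\ket x:=-C_j^{\dg}C_0\ket b$ and $\ket\psi:=\varepsilon\,\ket0\ot\ket x+\ket j\ot\ket b$ for a small $\varepsilon>0$. Since $\braket{b}{x}=-\braket{C_j b}{C_0 b}=0$ (using $C_j\ket b=0$), the vector $\ket x$ is nonzero and orthogonal to $\ket b$, so $\ket\psi$ has Schmidt rank two; and from $\r^{\Gamma}=\sum_{k,l}\ketbra{l}{k}\ot C_k^{\dg}C_l$ and $C_j\ket b=0$ one computes $\bra\psi\r^{\Gamma}\ket\psi=\varepsilon^{2}\norm{C_0 x}^{2}-2\varepsilon\norm{x}^{2}$, which is negative for $\varepsilon$ small enough. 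Hence $\r$ is 1-distillable.

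It remains to rule out the alternative, $C_j^{\dg}C_0\ket b=0$ for all $j=1,\dots,M-1$, equivalently $\ket w\perp\cR(C_j)$ for all such $j$; the claim is that this forces $\r$ to be B-reducible. The key idea is to split, not $\cH_B$, but the target space $\bbC^{R}$ of $C$, along the one-dimensional direction $\ket w$: let $P$ be the orthogonal projector onto $\lin\{\ket w\}$, $Q:=I-P$, and set $\r_1:=C^{\dg}PC$, $\r_2:=C^{\dg}QC$, so that $\r=\r_1+\r_2$ with $\r_1,\r_2\ge0$. Because $C_j^{\dg}\ket w=0$ for $j\ge1$, the block column $C^{\dg}\ket w$ reduces to $\ket0\ot(C_0^{\dg}\ket w)$, whence $\r_1=\proj{0}_A\ot\proj{\eta}$ is a pure product state with $\ket\eta:=C_0^{\dg}\ket w/\norm w=\r_B\ket b/\norm w\neq0$, so $\cR((\r_1)_B)=\lin\{\r_B\ket b\}$. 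On the other hand $(\r_2)_B=\r_B-\proj{\eta}$, and since $\bra b\r_B\ket b=\norm{C_0 b}^{2}=\norm w^{2}$ and $\braket{\eta}{b}=\bra b C_0^{\dg}C_0\ket b/\norm w=\norm w$, we get $\bra b(\r_2)_B\ket b=\norm w^{2}-|\braket{\eta}{b}|^{2}=0$, hence $(\r_2)_B\ket b=0$ and $\cR((\r_2)_B)\subseteq\ket b^{\perp}$. As $\r_B\ket b\notin\ket b^{\perp}$ while $\cR((\r_1)_B)+\cR((\r_2)_B)=\cR(\r_B)=\cH_B$, the two $B$-reduced ranges meet only in $0$ and span $\cH_B$; therefore $\cR(\r_B)=\cR((\r_1)_B)\oplus\cR((\r_2)_B)$, i.e.\ $\r=\r_1\oplus_B\r_2$, with $\r_2\neq0$ (else $\r=\r_1$ would have $\rank\r_A=1$, i.e.\ $M=1$). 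This contradicts B-irreducibility, so the easy case must occur and $\r$ is 1-distillable.

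The step I expect to be the real obstacle is the decomposition in the last case. The natural instinct (to decompose $\cH_B$ directly) does not work, because $\ket b$ need not be an eigenvector of $\r_B$; the right object to cut is the codomain $\bbC^{R}$ of $C$, split along the rank-one vector $C_0\ket b$. Once that is found, the remaining verifications (that $\r_1$ is product, that $(\r_2)_B$ annihilates $\ket b$, that the $B$-reduced ranges are complementary) are short linear algebra, and the only other point needing care is that the opening reduction is merely a unitary on $\cH_A$ and so cannot disturb B-irreducibility.
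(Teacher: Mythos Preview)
Your argument is correct. The block-form reduction gives $C_j\ket{b}=0$ for $j\ge1$ and $\ket{w}=C_0\ket{b}\neq0$; in the ``easy'' case the Schmidt-rank-two witness $\ket\psi=\varepsilon\ket0\ox\ket x+\ket j\ox\ket b$ works exactly as you compute (the cross terms in $\bra\psi\r^\Gamma\ket\psi$ each equal $-\varepsilon\norm{x}^2$ because $(C_j^\dg C_0\ket b)^\dg=\bra b C_0^\dg C_j$); and in the alternative case the decomposition $\r=C^\dg PC+C^\dg QC$ along $\lin\{\ket w\}$ in the codomain indeed produces $\r_1=\proj0\ox\proj\eta$ with $(\r_2)_B\ket b=0$, giving the claimed B-direct sum and the contradiction. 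Both summands are nonzero states, so the B-reducibility is genuine in the sense of Definition~\ref{df:red}.

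Note that the paper does not supply its own proof of this lemma: it is quoted from \cite{2012Distillability} and used as a black box. So there is no ``paper's proof'' to compare against here; your self-contained argument would serve as an independent verification. The one remark worth adding is that the insight you flag---splitting the codomain $\bbC^R$ along $C_0\ket b$ rather than trying to split $\cH_B$ directly---is indeed the crux, and you have executed it cleanly.
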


We shall apply the facts of this subsection in Sec. \ref{sec:rank}.

\section{Distillability in terms of Schmidt rank}
\label{sec:sr}

We recall that the classical-classical state is a separable state having no quantum correlation. That is, the state has zero quantum discord \cite{LHenderson_2001} and is mathematically a diagonal state. The first assertion of the following lemma has been proven in \cite{2019Separability}. We give a novel proof for the other claims of lemma. 

\begin{lemma}
\label{le:sr2=PPT}
    Every bipartite state of Schmidt rank two is separable, and thus cannot be distillable. Besides, the state is locally equivalent to a classical-classical state.
\end{lemma}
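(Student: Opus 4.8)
The plan is to peel off a product of local congruences that simultaneously diagonalize the two Hermitian pencils sitting in the $A$- and $B$-spaces of $\r$, using the positivity $\r\ge0$ at each stage. Since the separability claim is quoted from prior work, the real content is the local equivalence to a diagonal state; separability and non-distillability then come along automatically, as a diagonal state is visibly separable and hence PPT.

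First I would apply Lemma \ref{le:HermitianDECOMP} together with Lemma \ref{le:constantSPACEA+B}(iii)--(iv), taking $\r_A$ as the first summand of a Hermitian Schmidt decomposition, to write $\r=\r_A\ox B_1+A_2\ox B_2$ with $\{\r_A,A_2\}$ linearly independent Hermitian and $\{B_1,B_2\}$ linearly independent Hermitian (the case $\sr(\r)\le1$ is trivial: $\r=A_1\ox B_1$ with both factors, being semidefinite of one sign, congruent to diagonal matrices). As $\r$ is an $M\times N$ state, $\r_A$ is positive definite, so after $S_A\ox I_N$ with $S_A\r_A S_A^\dg=I_M$ we may take the first factor to be $I_M$. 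Absorbing the trace of the second factor into $B_1$, i.e. $A_2\mapsto A_2-\tfrac{\tr A_2}{M}I_M$ and $B_1\mapsto B_1+\tfrac{\tr A_2}{M}B_2$, I may also assume $\tr A_2=0$, and then $A_2\ne0$. A unitary $U_A\ox I_N$ diagonalizes $A_2$ to $\diag(a_1,\dots,a_M)$ with $\sum_i a_i=0$ and the $a_i$ not all equal, while fixing $I_M$. Now the state is block-diagonal in the $A$-index, $\r\sim\sum_{i=1}^M\proj{i}_A\ox(B_1+a_iB_2)$, and $\r\ge0$ forces each block $B_1+a_iB_2\ge0$.

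The pivotal point is that, thanks to $\sum_i a_i=0$, the $B$-marginal of this transformed state is exactly $MB_1$; and since an invertible local map on system $A$ preserves the range of the $B$-marginal, that marginal still has full rank $N$, so $B_1$ is positive definite. Hence I may apply $I_M\ox S_B$ with $S_BB_1S_B^\dg=I_N$, and then $I_M\ox U_B$ with $U_B$ a unitary diagonalizing $S_BB_2S_B^\dg$ to $\diag(b_1,\dots,b_N)$; this fixes $I_N$ and turns the state into $\sum_{i,j}(1+a_ib_j)\proj{i,j}$, diagonal in the product basis, with nonnegative entries because the state is positive semidefinite. That is a (non-normalized) classical-classical state, and the composition $S=(U_AS_A)\ox(U_BS_B)$ of the four congruences is the required product equivalence.

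The step I expect to be the main obstacle is the bookkeeping with the reduced operators: a local congruence on $A$ does change the $B$-marginal, so one must normalize $A_2$ to be traceless \emph{before} diagonalizing it, which is precisely what forces the new $B$-marginal to be a positive multiple of $B_1$ and thus lets the second pair of congruences go through. The remaining ingredients are standard (diagonalizing a Hermitian matrix by a unitary, and a positive-definite one to the identity by congruence), and positivity is preserved throughout because local equivalence preserves it; one should also note that repeated values among the $a_i$ or $b_j$ cause no trouble, as no genericity is used.
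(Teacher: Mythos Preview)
Your proof is correct and follows the same approach as the paper: congruence-diagonalize the Hermitian pair in the $A$-space using $\r_A>0$, then do the same on the $B$-side. The only cosmetic difference is that the paper bypasses your traceless normalization and $B$-marginal bookkeeping by invoking $A$--$B$ symmetry directly: since $\r_B$ itself lies in the $B$-space, one simply takes $W$ with $W\r_BW^\dg=I_N$ and $WB_2W^\dg$ diagonal, after which both tensor factors in $(S\ox W)\r(S^\dg\ox W^\dg)=\sum_j (SA_jS^\dg)\ox(WB_jW^\dg)$ are diagonal.
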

\begin{proof}
Suppose $\r_{AB}$ is a bipartite state of Schmidt rank two on the bipartite space $\cH_{AB}$. Lemma \ref{le:HermitianDECOMP} says that, $\r_{AB}$ can be written as 
\begin{eqnarray}
\label{eq:rho=A1}
\r_{AB}=A_1\otimes B_1+A_2 
\otimes B_2,
\end{eqnarray}
where $A_j$ and $B_j$ are all Hermitian matrices. By tracing out system $B$, we obtain that 
$
\r_A=A_1 (\tr B_1) +A_2 
(\tr B_2).
$
Because $\r_A$ is an invertible matrix, one of the two real numbers $\tr B_1$ and $\tr B_2$ is nonzero. Without loss of generality, we may assume that $\tr B_1\ne0$. Hence $A_1$ is the real linear combination of $\r_A$ and $A_2$. Applying the combination to \eqref{eq:rho=A1}, we obtain that $\r_{AB}=\r_A\otimes B_1'+A_2\otimes B_2'$. Because $\r_A$ is invertible, we can find an invertible matrix $S$ such that $S\r_A S^\dg$ is the identity matrix, and $SA_2 S^\dg $ is a real diagonal matrix. Applying the above argument to $B_1$ and $B_2$ in 
\eqref{eq:rho=A1}, we can find an invertible matrix $W$ such that $W\r_B W^\dg$ is the identity matrix, and $WB_2 W^\dg $ is a real diagonal matrix. As a result, the state $(S\otimes W)\r_{AB}(S^\dg \otimes W^\dg)$ is diagonal, which is a classical-classical state having no quantum correlation. Because the separability is invariant under local equivalence, we know that $\r_{AB}$ is also separable, and thus cannot be distillable. 
\end{proof}

Next, we proceed with bipartite states of Schmidt rank three. It turns out that such states are 1-undistillable or  undistillable.

\begin{theorem}
\label{thm:3xn=distill}
(i) If the space A of a bipartite state of Schmidt rank three has an Hermitian matrix of rank one, then the state is PPT and not distillable.
    
    (ii) For every $\min\{m,n\}>2$, every $m\times n$ NPT state of Schmidt rank three is 1-undistillable.

    (iii) Every $2\times n$ state of Schmidt rank three is separable. 
\end{theorem}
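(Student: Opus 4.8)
A local invertible operator on system $A$ changes neither the Schmidt rank nor the PPT/NPT status, so we may assume the rank-one Hermitian matrix in the space $A$ is $\ketbra{0}{0}$ and, using that $\r_A$ also lies in the space $A$ (Lemma \ref{le:constantSPACEA+B}(iv)), that $\r_A=I_M$. By Lemma \ref{le:constantSPACEA+B}(iii) we may write $\r=\ketbra{0}{0}\ox B_1+I_M\ox B_2+A_3\ox B_3$ with $B_j,A_3$ Hermitian, and after subtracting suitable multiples of $\ketbra{0}{0}$ and $I_M$ from $A_3$ we may assume $(A_3)_{00}=0$ and $\tr A_3=0$; tracing out $B$ then forces $\tr B_1=\tr B_3=0$ and $\tr B_2=1$. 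In the $A$-indexed block form of $\r$, the compression to $\ket{0}_A^{\perp}\ox\cH_B$ equals $I_{M-1}\ox B_2+A_3'\ox B_3$, which has Schmidt rank $\le2$ and is positive semidefinite, hence separable (in particular PPT) by Lemma \ref{le:sr2=PPT}.

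\textbf{Part (i), continued.} Now $\r^{\Gamma_A}=\ketbra{0}{0}\ox B_1+I_M\ox B_2+\overline{A_3}\ox B_3$ has $(0,0)$-block $B_1+B_2\ge0$ (a diagonal block of $\r$), has compression to $\ket{0}_A^{\perp}\ox\cH_B$ equal to $I_{M-1}\ox B_2+\overline{A_3'}\ox B_3$ — the partial transpose (in $\bbC^{M-1}$) of the separable state above, hence $\ge0$ — and has all remaining nonzero blocks among the $(0,k)$ and $(k,0)$ for $k\ge1$, each a scalar multiple of $B_3$. The goal is $\r^{\Gamma_A}\ge0$; the plan is a Schur-complement argument relative to the $(0,0)$-block, which reduces positivity of $\r^{\Gamma_A}$ to an inequality of the shape $I_{M-1}\ox B_2+\overline{A_3'}\ox B_3\ \ge\ (\vec m\,\vec m^{\dagger})\ox\big(B_3(B_1+B_2)^{+}B_3\big)$, with $\vec m$ the row-$0$ data, while the identical computation for $\r$ itself yields the same inequality with $\vec m\,\vec m^{\dagger}$ replaced by $\overline{\vec m}\,\vec m^{T}$ and $\overline{A_3'}$ by $A_3'$, and $\r\ge0$ is exactly that inequality. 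Transferring the inequality from $\r$ to $\r^{\Gamma_A}$ — equivalently, showing that partial transposition within the $3$-dimensional space $A$ preserves the cone of positive states with space $A$ inside it — is what I expect to be the technical crux of the whole theorem. Once $\r^{\Gamma_A}\ge0$ is established, non-distillability follows from the preliminaries.

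\textbf{Part (iii).} By Lemma \ref{le:HermitianDECOMP} the space $A$ of a $2\times N$ Schmidt-rank-three state $\r$ is a $3$-dimensional subspace $W\subseteq\mathrm{Herm}(\bbC^2)$, and $\r_A>0$ lies in $W$ (Lemma \ref{le:constantSPACEA+B}(iv)). Identifying $\mathrm{Herm}(\bbC^2)\cong\bbR^{1,3}$ with quadratic form $-\det$, the subspace $W$ is timelike, so $\det|_W$ is indefinite and $W$ contains a nonzero null vector, i.e.\ a rank-one Hermitian matrix; hence part (i) applies and $\r$ is PPT. Since $\SL(2,\bbC)$ acting on system $A$ by congruence (an $\slocc$ operation, preserving separability) realizes the connected Lorentz group and acts transitively on timelike $3$-planes, $\r$ is locally equivalent to $\r'=\begin{pmatrix}P & Q\\ Q & R\end{pmatrix}$ with $P,Q,R$ Hermitian $N\times N$ and $\r'\ge0$. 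Assuming $R$ invertible, the Schur complement gives $P-QR^{-1}Q\ge0$ and $\r'=\ketbra{0}{0}\ox(P-QR^{-1}Q)+\begin{pmatrix}QR^{-1}Q & Q\\ Q & R\end{pmatrix}$; the first summand is separable, while the second has rank $N$ and its two off-diagonal blocks coincide, so it is PPT and hence separable by Lemma \ref{le:2xn=NPT}(iv); therefore $\r'$, and with it $\r$, is separable. The case of singular $R$ is handled by restricting to $\cR(R)$ (or a perturbation argument).

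\textbf{Part (ii).} Suppose an $m\times n$ NPT state $\r$ of Schmidt rank three were $1$-distillable; then there is a Schmidt-rank-two $\ket{\ps}$, necessarily supported in $\cV_A\ox\cV_B$ with $\dim\cV_A=\dim\cV_B=2$, with $\bra{\ps}\r^{\Gamma}\ket{\ps}<0$. For $P$ the projection onto $\cV_A\ox\cV_B$ and a basis of $\cH_A$ adapted to $\cV_A$, one has $\bra{\ps}\r^{\Gamma}\ket{\ps}=\bra{\ps}(P\r P)^{\Gamma'}\ket{\ps}$, so $\tilde\r:=P\r P$ is a positive, NPT operator on $\cV_A\ox\cV_B\cong\bbC^2\ox\bbC^2$ of Schmidt rank $\le3$. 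Schmidt rank $\le2$ is impossible (Lemma \ref{le:sr2=PPT} would make $\tilde\r$ separable), so $\tilde\r$ has Schmidt rank exactly three; then neither $\tilde\r_A$ nor $\tilde\r_B$ can have rank $\le1$ (that would force Schmidt rank $\le1$), so $\tilde\r$ is a genuine $2\times2$ state of Schmidt rank three, which part (iii) makes separable — contradicting NPT. Hence $\r$ is $1$-undistillable; the hypothesis $\min\{m,n\}>2$ merely excludes the vacuous $2\times n$ case, already covered (more strongly) by part (iii). In summary, parts (ii) and (iii) reduce cleanly to part (i) together with the elementary Lorentz geometry of $\mathrm{Herm}(\bbC^2)$ and Lemmas \ref{le:2xn=NPT}, \ref{le:HermitianDECOMP}, \ref{le:sr2=PPT}, while the one genuinely nontrivial point is the positivity verification $\r^{\Gamma_A}\ge0$ in part (i).
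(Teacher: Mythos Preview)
Your (ii) and (iii) are correct and, despite the forward reference, do not actually need (i): in (iii), once the space $A$ is brought to the real symmetric $2\times2$ matrices, the block form $\r'=\begin{pmatrix}P&Q\\Q&R\end{pmatrix}$ visibly satisfies $(\r')^{\Gamma_A}=\r'$, and your Schur-complement splitting then yields separability directly. This is more explicit than the paper, which for (iii) quotes \cite{2000Separability} for ``$\r^\Gamma=\r\Rightarrow$ separable'', and for (ii) projects only on the $A$ side to a $2\times n$ operator and real-symmetrizes the three $2\times2$ Hermitian $A_j'$ by congruence, reaching PPT without passing through (iii).

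The genuine gap is (i): you set up the Schur complement but stop at ``transferring the inequality from $\r$ to $\r^{\Gamma_A}$''. The paper's fix is the same real-symmetrization idea you used in (iii), lifted to dimension $m$: with $A_1=I_m$ and $A_2=\ketbra{0}{0}$, choose a unitary $U=1\oplus V$ so that $UA_3U^\dagger$ is real symmetric tridiagonal (Lanczos/Householder with starting vector $\ket{0}$); then all three $UA_jU^\dagger$ are real symmetric, hence $(U\ox I)\r(U^\dagger\ox I)$ equals its own partial transpose and $\r$ is PPT.

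Your own route also closes, by induction on $m$. Writing $A_3=\begin{pmatrix}0&v^*\\v&A_3'\end{pmatrix}$ and $C=B_3(B_1+B_2)^{+}B_3$, the Schur complements of $\r$ and of $\r^{\Gamma_A}$ are $X=I_{m-1}\ox B_2+A_3'\ox B_3-(vv^*)\ox C$ and $Y=I_{m-1}\ox B_2+(A_3')^T\ox B_3-(\bar v v^T)\ox C$; since $(vv^*)^T=\bar v v^T$ and $(A_3')^T=\overline{A_3'}$, one has $Y=X^{\Gamma_A}$. Thus $\r^{\Gamma_A}\ge0$ reduces to $X$ being PPT; but $X\ge0$ lives on $\bbC^{m-1}\ox\bbC^n$, has Schmidt rank $\le3$, and its space $A$ contains the rank-one Hermitian $vv^*$ (or else has Schmidt rank $\le2$ and Lemma~\ref{le:sr2=PPT} applies). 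The base case $m=2$ is your (iii). So you were one observation away from an alternative proof of (i).
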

\begin{proof}
(i) Suppose $\r_{AB}$ is a bipartite state of Schmidt rank three on the bipartite space $\cH_{AB}=\bbC^m\otimes\bbC^n$. Using Lemmas \ref{le:constantSPACEA+B} (iii), we can assume that 
\begin{eqnarray}
\label{eq:rho=A123}
\r_{AB}=A_1\otimes B_1+A_2 
\otimes B_2+A_3\otimes B_3,
\end{eqnarray}
where $A_j$ and $B_j$ are all Hermitian matrices, $A_1=\r_A$ and $\rank A_2=1$. Up to the local equivalence, we can assume that $A_1=I_m$ and $A_2=1\oplus 0_{m-1}$. Because $A_3$ is Hermitian, one can find a unitary gate $U=1\oplus V$ such that $UA_3U^\dg$ is a real symmetric tridiagonal matrix. Hence, the partial transpose of $(U\otimes I)\r_{AB}(U^\dg \otimes I)$ is still positive semidefinite. So $\r_{AB}$ is PPT and not distillable.

(ii) For every bipartite state $\r_{AB}$ of Schmidt rank three, we can 
still apply  \eqref{eq:rho=A123}, in which $A_1=I_m$, and $A_2,A_3$ are both Hermitian matrices. 

We disprove the assertion. Suppose for some $\min\{m,n\}>2$, there is an $m\times n$ NPT state $\r_{AB}$ of Schmidt rank three is 1-distillable. Hence, there is a rank-two matrix $M$ such that the projected state $\s_{AB}=(M\otimes I)\rho_{AB}(M^\dg \otimes I)$ is entangled and NPT. Let $U$ be an order-$m$ invertible matrix such that the nonzero entries of $UM$ are all in the top two rows of $UM$. So we can refer $UM$ to the first two rows of it and assume that $\s_{AB}$ is a $2\times2$ block matrix of each block of order $n$. It follows from \eqref{eq:rho=A123} that $\s_{AB}=\sum^3_{j=1}A_j'\otimes B_j$, where $A_j'$'s are all $2\times2$ Hermitian matrices, and $A_1'$ is invertible. So we can find a $2\times2$ invertible matrix $S$ such that $SA_j' S^\dg$'s are all real symmetric matrices. As a result, the state $(S\otimes I)\s_{AB}(S^\dg \otimes I)$ is a $2\times n$ PPT state. It is a contradiction with the fact that $\s_{AB}$ is NPT. We have proven the assertion.

(iii) It follows from the proof of (ii) that every $2\times n$ state $\r_{AB}$ of Schmidt rank three is PPT. In particular, up to a local equivalence we have $\r_{AB}^\Gamma=\r_{AB}$. It has been proven that such $\r_{AB}$ is separable \cite{2000Separability}.
\end{proof}

So far we have investigated the distillability of NPT states in terms of Schmidt rank two and three. Nevertheless, we have found that it is not easy to apply the existing results to NPT states of Schmidt rank four. To further distill more entangled states, we shall investigate their rank in the next section. Such states may have various Schmidt rank, as we will in see in the next section.

\section{Distillability in terms of matrix rank}
\label{sec:rank}

In this section, we distill some types of bipartite NPT states in terms of rank. In Lemma \ref{le:kxk}, we provide a sufficient condition to determine the distillability which paves the way for proving Lemma \ref{le:MxNrank(N+1)NPT=irreducible}. Next, in Lemma \ref{le:reducible}, we show that the 1-distillablility of all bipartite NPT states is equivalent to that of all B-irreducible bipartite NPT states. Further, the 1-distillability of all $M\times N$ NPT states of rank $\max\{M,N\}+1$ is equivalent to that of all B-irreducible $M\times N$ NPT states of rank $\max\{M,N\}+1$. Lemma \ref{le:MxNrank(N+1)NPT=irreducible} and Theorem \ref{thm:MxNrank(N+1)NPT} are crucial work in this section. Lemma \ref{le:MxNrank(N+1)NPT=irreducible} shows the relation between the  distillability and range of bipartite states containing a product vector. This lays the foundation to show that all $M\times N$ B-irreducible NPT states of rank $N+1$ are distillable in Theorem \ref{thm:MxNrank(N+1)NPT}. Eventually, we derive that the 1-distillability of every $M\times N$ NPT state of rank $\max\{M,N\}+1$ is equivalent to that of every $3\times N$ B-irreducible NPT state of rank $\max\{M,N\}+1$ in Lemma \ref{le:equivalence}.

The following observation extends Lemma 2 of \cite{2016Distillability}.
%Let $\r$ be a bipartite state such that $\r^\G$ has a principal $k\times k$ submatrix of negative determinant, then $\r$ is distillable.	
\begin{lemma}
\label{le:kxk}
Let $\r$ be a bipartite state such that $\r^\G$ has a principal $k\times k$ submatrix of negative determinant. If the diagonal elements of the submatrix are from two blocks of $\r=\sum_{i,j}\ketbra{i}{j}\otimes \s_{ij}$, then $\r$ is distillable.
\end{lemma}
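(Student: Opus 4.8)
The plan is to prove the stronger statement that $\r$ is $1$-distillable, i.e.\ to exhibit a Schmidt-rank-two vector $\ket{w}\in\cH_A\ox\cH_B$ with $\bra{w}\r^\G\ket{w}<0$. Fix standard product basis vectors $\ket{i_1,j_1},\dots,\ket{i_k,j_k}$ indexing the rows and columns of the given principal submatrix $P$ of $\r^\G$, where $i_a$ is the A-index (block index) and $j_a$ the B-index. The hypothesis that the diagonal entries of $P$ lie in only two blocks of $\r$ means that $\{i_1,\dots,i_k\}$ has at most two values, say $p\neq q$; it cannot reduce to a single value $p$, because then $P$ would be a principal submatrix of the positive semidefinite block $\s_{pp}=\bra{p}\r\ket{p}$ and hence would have nonnegative determinant. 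Consequently every vector of $\cH_A\ox\cH_B$ supported on these $k$ coordinates has the form $\ket{p}\ox\ket{u}+\ket{q}\ox\ket{v}$ with $\ket{u},\ket{v}\in\cH_B$, and therefore has Schmidt rank at most two.

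First I would note that, since $P$ is Hermitian with $\det P<0$, it has an odd number of negative eigenvalues, in particular at least one. Let $\ket{w}$ be an associated eigenvector, regarded as a nonzero vector of $\cH_A\ox\cH_B$ supported on the $k$ chosen coordinates. Then $\bra{w}\r^\G\ket{w}$ equals the value of the quadratic form of $P$ on this eigenvector, hence is strictly negative. By the first paragraph $\sr(\ket{w})\le 2$. Moreover $\sr(\ket{w})\neq 1$: if $\ket{w}$ were a product vector $\ket{a}\ox\ket{b}$ one would have $\bra{a,b}\r^\G\ket{a,b}=\bra{a^*,b}\r\ket{a^*,b}\ge 0$ (with $\ket{a^*}$ the entrywise complex conjugate of $\ket{a}$), contradicting $\bra{w}\r^\G\ket{w}<0$. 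Hence $\sr(\ket{w})=2$, and $\ket{w}$ is the required witness, so $\r$ is $1$-distillable and in particular distillable.

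The only delicate point is the last one, namely forcing the Schmidt rank of the negativity witness down to exactly two, and this is exactly where the hypothesis that the diagonal entries come from (exactly) two blocks enters; once that reduction is in place the passage from $k=2$ (Lemma~2 of \cite{2016Distillability}) to arbitrary $k$ is free. An equivalent route, if one prefers to invoke earlier results rather than the eigenvalue computation, is to set $\r'=(\Pi_A\ox\Pi_B)\,\r\,(\Pi_A\ox\Pi_B)$, where $\Pi_A$ projects onto $\lin\{\ket{p},\ket{q}\}$ and $\Pi_B$ onto the span of the B-indices occurring among $j_1,\dots,j_k$; since these are projectors onto coordinate subspaces one checks $(\r')^\G=(\Pi_A\ox\Pi_B)\,\r^\G\,(\Pi_A\ox\Pi_B)$, which contains $P$ as a principal submatrix and is therefore not positive semidefinite. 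Thus $\r'$ is a $2\times n$ NPT state, hence $1$-distillable by Lemma~\ref{le:2xn=NPT}(i), and any Schmidt-rank-two distillation witness for $\r'$ is one for $\r$.
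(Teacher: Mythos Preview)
Your proof is correct. The second route you sketch---projecting with $\Pi_A\ox\Pi_B$ and then invoking Lemma~\ref{le:2xn=NPT}(i)---is essentially the paper's argument; the paper projects only on the $A$ side with $P\ox I$ where $P=\ketbra{l}{l}+\ketbra{m}{m}$, and your additional $B$-projection is harmless but unnecessary since $P$ is already a principal submatrix of $(P\ox I)\r^\G(P\ox I)$.

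Your first route is a genuinely different and more self-contained argument. Rather than reducing to the $2\times n$ case and invoking the known distillability result, you extract the Schmidt-rank-two witness directly: the principal submatrix $P$, being Hermitian with negative determinant, has a negative eigenvalue, and its eigenvector---supported only on $A$-indices $p,q$---is automatically of Schmidt rank at most two, and in fact exactly two by the identity $\bra{a,b}\r^\G\ket{a,b}=\bra{a^*,b}\r\ket{a^*,b}\ge 0$. This approach buys you a proof that does not rely on Lemma~\ref{le:2xn=NPT}(i) as a black box, at the cost of a short spectral argument. The paper's route is shorter to write down because it defers that work to the cited lemma.
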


\begin{proof}
Let $\r=\sum_{i,j}\ketbra{i}{j}\otimes \s_{ij}$, then $\r^\G=\sum_{i,j}\ketbra{j}{i}\otimes \s_{ij}$. Denote the submatrix as $[s_{ij}]$. From $\s_{ii}\geq0$ we know that the diagonal entries of $[s_{ij}]$ belong to different blocks, say $\s_{ll},\s_{mm}$. Let P be the orthogonal projector onto
the $2$-dimensional subspace of $\cH_A$ spanned by $\{\ket{l},\ket{m}\}$, that is $P=\ketbra{l}{l}+\ketbra{m}{m}$. Then the projected state $(P\otimes I)\r(P^\dg\otimes I)$ has a principal $k\times k$ submatrix of negative determinant, thus it is an $2\times n$ NPT state. Using Lemma \ref{le:2xn=NPT}, we obtain that $\r$ is distillable.
\end{proof}

As an application of Definition \ref{df:red}, we present the following observation.

\begin{lemma}
\label{le:reducible}
(i) All bipartite NPT states are 1-distillable if and only if all B-irreducible bipartite NPT states are 1-distillable. 

(ii) Let $M\le N$. All $M\times N$ NPT states of rank $N+1$ are 1-distillable if and only if all B-irreducible  $M\times N$ NPT states of rank $N+1$ are 1-distillable. 
\end{lemma}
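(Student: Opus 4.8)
The ``only if'' directions are immediate: a B-irreducible state is in particular a state, and a B-irreducible $M\times N$ NPT state of rank $N+1$ is in particular such a state. For the ``if'' directions the common tool is the interplay of partial transposition with a B-direct sum. Suppose $\r=\r_1\oplus_B\r_2$. Conjugating by an invertible product operator $I\ox W$ (with $W$ invertible on $\cH_B$), which preserves being NPT, being 1-distillable, the matrix rank, and both reduced ranks, all by invariance under local equivalence, I may assume the decomposition is orthogonal: $\cH_B=\cH_B^{(1)}\op\cH_B^{(2)}$ with $\r_i$ supported on $\cH_A\ox\cH_B^{(i)}$. Since transposing the $A$ factor leaves the $B$-block structure untouched, $\r^\G$ is again block-diagonal with respect to $\cH_B^{(1)}\op\cH_B^{(2)}$, with blocks $\r_1^\G$ and $\r_2^\G$; hence $\r$ is NPT iff $\r_1$ or $\r_2$ is NPT. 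Moreover, if say $\r_1$ is 1-distillable, choose a Schmidt-rank-two $\ket\ps$ supported on $\cH_A\ox\cH_B^{(1)}$ with $\bra\ps\r_1^\G\ket\ps<0$; then $\r_2^\G\ket\ps=0$, since $\ket\ps$ is orthogonal to the support $\cH_A\ox\cH_B^{(2)}$ of $\r_2^\G$, so $\bra\ps\r^\G\ket\ps=\bra\ps\r_1^\G\ket\ps<0$ and $\r$ is 1-distillable. In short, 1-distillability of an NPT B-direct summand is inherited by $\r$.

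For (i) I would induct on the matrix rank of $\r$. If $\r$ is B-irreducible (in particular if $\rank\r=1$, since rank-one states are pure and hence B-irreducible), it is 1-distillable by hypothesis. If instead $\r=\r_1\oplus_B\r_2$, then $\rank\r_1,\rank\r_2<\rank\r$, one summand is NPT, and that summand is 1-distillable by the induction hypothesis, whence so is $\r$ by the mechanism above.

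For (ii) I would induct on $N$, proving: for every $M\le N$, every $M\times N$ NPT state $\r$ of rank $N+1$ is 1-distillable. If $\r$ is B-irreducible, invoke the hypothesis. Otherwise $\r=\r_1\oplus_B\r_2$; set $r_i=\rank\r_i$, $N_i=\rank (\r_i)_B$, $M_i=\rank (\r_i)_A$, so that $r_1+r_2=N+1$, $N_1+N_2=N$ with $N_i\ge1$, and $M_i\le M\le N$. If $r_1\ge N_1+2$ (or symmetrically $r_2\ge N_2+2$), then $r_2\le N_2-1<N_2\le\max\{M_2,N_2\}$, so by Lemma \ref{le:2xn=NPT}(iii) $\r_2$ is NPT and by Lemma \ref{le:2xn=NPT}(ii) it is 1-distillable; done. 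Otherwise $r_i\le N_i+1$ for both $i$, and I pick an NPT summand $\r_k$. If $M_k>N_k$ or $r_k\le N_k$, then $r_k\le\max\{M_k,N_k\}$, so $\r_k$ is 1-distillable by Lemma \ref{le:2xn=NPT}(ii). Otherwise $M_k\le N_k$ and $r_k=N_k+1$, so $\r_k$ is an $M_k\times N_k$ NPT state of rank $N_k+1$ with $N_k<N$, hence 1-distillable by the induction hypothesis. In every case $\r_k$, and therefore $\r$, is 1-distillable.

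The crux is the rank bookkeeping in (ii): a B-direct summand of an $M\times N$ rank-$(N+1)$ state need not again be ``one above its larger reduced rank'', and indeed its matrix rank may fall below its $B$-reduced rank, as a maximally entangled state shows, so a naive induction would stall. The dichotomy above is what resolves this: whenever one summand has rank at least $N_i+2$, the complementary summand necessarily has rank strictly below the larger of its reduced ranks, which by Lemma \ref{le:2xn=NPT}(iii)--(ii) already forces it to be NPT and 1-distillable. A routine secondary point is the orthogonalization of the B-direct sum, which relies on the invariance of the NPT property, of 1-distillability, and of the relevant ranks under local equivalence.
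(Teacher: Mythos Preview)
Your proof is correct and follows essentially the same strategy as the paper: reduce to an NPT B-summand and use the rank bookkeeping $r_1+r_2=N+1$, $N_1+N_2=N$ together with Lemma~\ref{le:2xn=NPT}(ii)--(iii) to dispatch the low-rank case and fall back on the hypothesis (or induction) in the ``rank $=N_k+1$'' case. The only presentational differences are that the paper decomposes $\r$ fully into B-irreducible summands in one step and then argues on the pieces, whereas you split off one summand at a time and induct (on $\rank\r$ for (i), on $N$ for (ii)); and that you make explicit the orthogonalization of the B-direct sum via local equivalence and the inheritance of 1-distillability from a summand, which the paper leaves implicit.
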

\begin{proof}
(i) It suffices to prove the "if" part. We assume that all B-irreducible bipartite NPT states are 1-distillable.
We take an arbitrary B-reducible bipartite NPT state $\r$ which can be decomposed into $\r=\r_1\oplus_B\r_2$. If $\r_1$ or $\r_2$ is still B-reducible, we continue to decompose it into the B-direct sum of two linear operaters, until $\r=(\oplus_{i})_B\r_i$ where $\r_i$'s are B-irreducible. 
Since $\r$ is an NPT state, for some $j$ we have that $\r_j$ is an NPT state. We can take a projector to transform $\r$ to $\r_j$, from our hypothesis $\r_j$ is 1-distillable and thus $\r$ is 1-distillable.

(ii) It still suffices to prove the "if" part. We assume that all B-irreducible $M\times N$ NPT states of rank $N+1$ are 1-distillable. For an arbitrary $M\times N$ NPT state $\r$ of rank $N+1$, we decompose it into the sum of B-irreducible states $\r=(\oplus_i)_B \r_i$. If there exists some $m_j\times n_j$  state $\r_j$ such that $\rank(\r_j)<n_j$, by Lemma \ref{le:2xn=NPT}(iii) $\r_j$ is an NPT state. We can transform $\r$ to $\r_j$, and by Lemma \ref{le:2xn=NPT}(ii) we know that $\r$ is 1-distillable. Otherwise, for any $i$, $\r_i$ is an $m_i\times n_i$ state with $\rank(\r_i)\geq n_i$. Since
\begin{eqnarray}
\label{eq:Rrhok_subseteq}
    \cR(\r_k)\subseteq\cR((\r_k)_A)\ox\cR((\r_k)_B)
\end{eqnarray}
and 
\begin{eqnarray}
    \cR(\r_l)\subseteq\cR((\r_l)_A)\ox\cR((\r_l)_B)
\end{eqnarray}
for any $k\neq l$, where $\cR((\r_k)_B) \cap\cR((\r_l)_B)=\{0\}$. Then $\rank(\r)=\sum_i \rank(\r_i)$.
In terms of $\rank((\r_i)_B)= n_i\leq\rank(\r_i)$,
we find that 
\begin{eqnarray}
    N=\rank(\r_B)=\sum_i \rank((\r_i)_B)\leq \sum_i \rank(\r_i)=\rank(\r)=N+1.
\end{eqnarray}
Therefore, for any $i$ either 
\begin{eqnarray}
\label{eq:rank(riB)=rank(ri)}
    \rank((\r_i)_B)=\rank(\r_i)
\end{eqnarray} or 
\begin{eqnarray}
\label{eq:rank(riB)+1=rank(ri)}
    \rank((\r_i)_B)+1=\rank(\r_i)
\end{eqnarray}
stands. Because $\r$ is an NPT state, we obtain that $\r_j$ is an NPT state for some $j$. If \eqref{eq:rank(riB)=rank(ri)} stands, $\r_j$ is an $m_j\times n_j$ NPT state with rank $n_j$, by Lemma \ref{le:2xn=NPT}(ii) $\r_j$ is 1-distillable. If \eqref{eq:rank(riB)+1=rank(ri)} stands, $\r_j$ is an $m_j\times n_j$ NPT state with rank $n_j+1$, when $m_j>n_j$ by Lemma \ref{le:2xn=NPT}(ii) $\r_j$ is 1-distillable; when $m_j\leq n_j$ by the hypothesis $\r_j$ is 1-distillable. Under a projector $P$, $(I\ox P^\dg)\r(I\ox P)=\r_j$, so $\r$ is 1-distillable.
\end{proof}

In the following, we investigate a special case of distilling B-irreducible NPT states.

 \bl
\label{le:MxNrank(N+1)NPT=irreducible}
If $\r$ is an $M\times N$ B-irreducible NPT state of rank $N+1$
such that the range of $\r$ contains at least one product vector,
then $\r$ is 1-distillable.
 \el
\begin{proof}
By using Eq. (\ref{ea:MxN-State}), we have $\r=C^\dag C$ where
$C=[C_0~C_1~\cdots~C_{M-1}]$ and the blocks $C_i$'s
are $(N+1)\times N$ matrices. 
The blocks are linearly independent because $\rank\r_A=M$.

We disprove the assertion. Assume that some $\r$ is not 1-distillable. Using Lemma \ref{le:2xn=NPT}, we have $2<M\le N$. Since $\cR(\r)$ contains at least one product vector, we may
assume that the first row of $C_i$ is 0 for $i>0$.
For the state $\s:=[C_1~\cdots~C_{M-1}]^\dag[C_1~\cdots~C_{M-1}]$,
we have $\rank\s \le N$ and $\s_B=\sum_{i>0} C_i^\dag C_i$. If $\s_B\ket{b}=0$ for some
$\ket{b}\ne0$, then $C_i\ket{b}=0$ for $i>0$ and so
$\ket{0}^\perp\ox\ket{b}\subseteq\ker\r$. So $\r$ is
1-distillable by Lemma \ref{le:irreducible}, which is a contradiction with the assumption. Thus we may assume that $\rank\s_B=N$. Next Lemma \ref{le:2xn=NPT} (iii) implies that 
$\rank\s=N$. Lemma \ref{le:2xn=NPT} (iv) implies that $\s$ is PPT, and $\s$ is the convex sum of exactly $N$ pure product states. 

By dropping the first row of $C_i$, we obtain the $N\times N$ matrix
$C'_i$, $i=0,1,\ldots,M-1$. 
Up to the local equivalence $I\otimes V$ for some invertible $V$, we may assume that $C'_1=I_N$,
 \bea \label{jed:C'-blokovi}
 C'_i=\lambda_{i1} I_{l_1}\oplus\cdots\oplus \lambda_{ik} I_{l_k},
 \quad i>0; \quad l_1 + \cdots + l_k=N,
 \eea
and that whenever $r\ne s$ there exists an $i>1$ such that
$\lambda_{ir}\ne\lambda_{is}$. (Note that all $\lambda_{1r}=1$.) Since the $C_i$
are linearly independent, each set $\{\lambda_{ir}:r=1,\ldots,k\}$,
$i>1$, must have at least two elements, namely $k>1$.
Let $C'_0=[A_{ij}]_{i,j=1}^k$ be the partition with the $A_{ii}$
square of order $l_i$. If some $A_{rs}\ne0$, $r\ne s$, then $\r$ is 1-distillable by Lemma \ref{le:kxk} and projecting $\r\ra P^\dg\r P$ where $P=(\ket{0}(\bra{0}+x\bra{1})+\ket{j}\bra{1})\otimes I$ for some $x$ and $j>0$. It is a contradiction with the assumption. Hence, we may assume that $C'_0=B_1\oplus\cdots\oplus B_k$ with
$B_i=A_{ii}$ square of order $l_i$ and upper triangular.

Suppose the first row of $C_0$ consists of the vectors
$w_1,\ldots,w_k$ of lengths $l_1,\ldots,l_k$, respectively. Let
$\m_i$ and $\nu_i$ be the first entries of $w_i$ and $B_i$,
respectively. If some $\m_i$ is 0, say $\m_1=0$, then by subtracting from $C_i$,
$i\ne1$, a suitable scalar multiple of $C_1$, we may assume that the
first columns of the $C_i$, $i\ne1$, are 0.
Since $\r$ is B-irreducible,
the second row of $C_0$ is not zero.
It follows that the state $[C_0~C_1]^\dg [C_0~C_1]$ is 1-distillable by Lemma \ref{le:irreducible},
and so is $\r$. It is a contradiction with the assumption.
Thus, we may assume that all $\m_i\ne0$. 

Because $\r$ is NPT, at least one block of $B_j$'s, say $B_1$ is not a diagonal matrix. We can find one block of $C_j$'s with $j>1$, say $C_2$ satisfying $\lambda_{21}\ne \lambda_{2k}$. Now the projected state $K^\dg K$ where $K=[C_0-\n_k C_1, C_2-\lambda_{2k}C_1]$ has to be PPT. We can further project the state $K^\dg K$ onto $L^\dg L$ where $L=[B_1-\n_k I_{l_1},(\lambda_{21}-\lambda_{2k})I_{l_1}]$. Note that this is a $2\times l_1$ PPT state of rank $l_1$. Using Lemma \ref{le:2xn=NPT} (iii), we know that $B_1-\n_k I_{l1}$ is normal and thus the upper triangular matrix $B_1$ is diagonal. It is a contradiction with the fact that $B_1$ is not diagonal. So the assumption that $\r$ is not 1-distillable fails, and we have completed the proof.
\end{proof}

Now we are in a position to present the main result of this section.

\begin{theorem}
\label{thm:MxNrank(N+1)NPT}
$(M,N>3)$ If $\r$ is an $M\times N$ B-irreducible NPT state of rank $N+1$, then $\r$ is 1-distillable.
\end{theorem}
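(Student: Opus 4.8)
The strategy is to reduce the general case to the situation already handled in Lemma \ref{le:MxNrank(N+1)NPT=irreducible}, namely that the range of $\r$ contains a product vector. So the plan is: first dispose of the easy dimension count, then argue that a $B$-irreducible $M\times N$ NPT state of rank $N+1$ with $M,N>3$ must have a product vector in its range, and finally invoke Lemma \ref{le:MxNrank(N+1)NPT=irreducible} to conclude $1$-distillability. If $\r$ fails to be $1$-distillable then by Lemma \ref{le:2xn=NPT}(ii) we must have $2<M\le N$ (if $M>N$ we would instead look at rank $\le \max\{M,N\}=M$ via the transpose, or rank $N+1\le M$; either way that lemma finishes it), so assume $3<M\le N$.

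The heart of the argument is a dimension estimate on the range. Write $\r=C^\dag C$ as in \eqref{ea:MxN-State} with $(N+1)\times N$ blocks $C_i$; the range $\cR(\r)$ is the column span of $C^\dag$, a subspace of $\cH_A\ox\cH_B=\bbC^M\ox\bbC^N$ of dimension $N+1$. By Lemma \ref{le:maximumdim}(i), a subspace in which every nonzero vector has Schmidt rank $\ge k$ has dimension at most $(M-k+1)(N-k+1)$; equivalently, if $\dim\cU>(M-k+1)(N-k+1)$ then $\cU$ contains a vector of Schmidt rank $<k$. I want $k=2$, i.e. I want $N+1>(M-1)(N-1)$. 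This is false in general, so the range itself need not contain a product vector directly — instead I will project $\r$ onto a smaller subspace of $\cH_A$ while keeping it NPT and controlling the rank. Concretely: since $M\le N$, pick any $4$-dimensional (or well-chosen low-dimensional) subspace $\cH'_A\subseteq\cH_A$ and let $P$ be the projector onto it; the projected operator $\s=(P\ox I)\r(P\ox I)$ is still NPT for a suitable choice of $P$ (because NPT is detected by a rank-two witness $\ket\ps$ living on a $2$-dimensional subspace of $\cH_A$, one can always choose $\cH'_A$ containing that $2$-dimensional subspace), has rank at most $N+1$, and lives in $\bbC^{M'}\ox\bbC^N$ with $M'\le 4$. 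For the range of this smaller state one checks $\dim\cR(\s)\le N+1\le (M'-1)(N-1)+1$ once $M'\ge 3$ and $N\ge$ (a small bound), so by Lemma \ref{le:maximumdim}(ii) — or the deduced consequence with $k=2$ — the range of $\s$ contains a product vector. One must verify $\s$ is (or can be taken) $B$-irreducible, or else decompose via Lemma \ref{le:reducible}; then Lemma \ref{le:MxNrank(N+1)NPT=irreducible} applies to $\s$, giving $1$-distillability of $\s$, hence of $\r$.

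The main obstacle I anticipate is the bookkeeping around the projection: one must simultaneously (a) keep the projected state NPT — which forces $\cH'_A$ to contain the support of the violating vector, so $\dim\cH'_A\ge 3$ or $4$ depending on the witness structure; (b) keep the rank from growing, which is automatic since projection cannot increase rank; (c) ensure the resulting state is $B$-irreducible so that Lemma \ref{le:MxNrank(N+1)NPT=irreducible} is applicable, which may require first passing to a $B$-irreducible summand via Lemma \ref{le:reducible}(i) and checking the summand still has rank exactly (its $B$-dimension)$+1$; and (d) make the inequality $\dim\cR(\s)\le(M'-1)(N-1)+1$ actually hold for the relevant $M',N$ — this is where the hypotheses $M,N>3$ are used, and the borderline small values of $N$ may need the separate elementary arguments of Lemma \ref{le:2xn=NPT}(i)(ii). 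Once a product vector is produced in the range of an appropriate $B$-irreducible rank-$(\dim_B+1)$ NPT projection, Lemma \ref{le:MxNrank(N+1)NPT=irreducible} closes the proof with no further computation.
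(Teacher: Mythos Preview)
Your reduction idea is natural, but the dimension count on which it rests is pointed the wrong way. To force a product vector in a subspace of $\bbC^{M'}\ox\bbC^N$ via Lemma \ref{le:maximumdim} you need the subspace to have dimension \emph{at least} $(M'-1)(N-1)+1$, not at most. The range of your projected state has $\dim\cR(\s)\le N+1$, and for $M'\ge 3$, $N>3$ one has $N+1<(M'-1)(N-1)+1$ (e.g.\ $M'=4$ gives $3N-2$, and $M'=3$ gives $2N-1$, both $>N+1$). So no product vector is guaranteed in $\cR(\s)$, and the appeal to Lemma \ref{le:MxNrank(N+1)NPT=irreducible} collapses. Concretely, a generic $(N+1)$-dimensional subspace of $\bbC^4\ox\bbC^N$ contains no product vector at all for $N\ge 4$, so no choice of $4$-dimensional $\cH'_A$ will rescue the argument without further input.

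The paper's proof exploits exactly the opposite subspace: the \emph{kernel} of $\r$, which has dimension $MN-N-1$. The hypothesis $M>3$ is what makes $MN-N-1>(M-1)(N-1)+1$, so $\ker\r$ contains infinitely many product vectors; this is the single place where $M>3$ is used. From a product vector $\ket{a,b}\in\ker\r$ one gets $\rank\bra{a}\r\ket{a}<N$, and the proof then fixes $\ket{0}_A$ so that $R:=\rank\bra{0}\r\ket{0}$ is maximal among such sub-$N$ ranks. This maximality is used repeatedly to rule out intermediate-rank blocks. A fairly long block-matrix analysis (splitting on whether a certain corner state $\s$ has rank $N-R$ or $N-R+1$, invoking Lemma \ref{le:2xn=NPT}(iv), Lemma \ref{le:kxk}, and Lemma \ref{le:irreducible} several times) eventually manufactures a product vector in $\cR(\r)$, contradicting the starting assumption and completing the argument via Lemma \ref{le:MxNrank(N+1)NPT=irreducible}. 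There is no shortcut that avoids this structural work; the range is simply too small for Lemma \ref{le:maximumdim} to do the job on its own.
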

\begin{proof}
We disprove the claim. Assume that some $\r$ is not 1-distillable.
By using Lemma \ref{le:MxNrank(N+1)NPT=irreducible}, we may assume that the range of $\r$ has no product vector. By Lemma \ref{le:2xn=NPT} (ii), we may assume that $M\le N$. 
For any $\ket{a}\in\cH_A$ let $r_a$ be the rank of the linear
operator $\bra{a}\r\ket{a}$. Since $M>3$, we have $\dim\ker\r=MN-N-1>(M-1)(N-1)+1$. So by Lemma \ref{le:maximumdim} (ii) $\ker\r$ contains infinitely many product vectors. If $\ket{a,b}\in\ker\r$ is a product vector then
$\bra{a}\r\ket{a}$ kills the vector $\ket{b}$, and so $r_a<N$. Let
$R$ be the maximum of $r_a$ taken over all $\ket{a}\in\cH_A$ such
that $r_a<N$. Thus $R<N$. Without any loss of generality we may
assume that $\bra{0}_A\r\ket{0}_A$ has rank $R$.

We write $\r$ as in Eq. (\ref{ea:MxN-State}), i.e., $\r=C^\dag C$
where $C=[C_0~\cdots~C_{M-1}]$ and the blocks $C_i$ are $(N+1)\times
N$ matrices. As $C_0^\dag C_0=\bra{0}_A\r\ket{0}_A$, $\rank C_0=R$ and we may
assume that
 \begin{equation}
C_0=\left[\begin{array}{cc}
    I_R & 0
 \\ 0 & 0 \end{array}\right]; \quad
C_i=\left[\begin{array}{cc}
    C_{i0} & C_{i1}
 \\ C_{i2} & C_{i3} \end{array}\right],~ i>0,
 \end{equation}
where the $C_{i0}$ are $R\times R$ matrices. Because $\r$ is not 1-distillable, Lemma \ref{le:kxk} implies that all $C_{i1}=0$. The state 
\begin{eqnarray}
\label{eq:sigma'}    
\s={C'}^\dag C', \end{eqnarray}
where $C'=[C_{1,3}~\cdots~C_{M-1,3}]$,
acts on the bipartite subspace $\ket{0}^\perp\otimes\lin\{\ket{R},...,\ket{N-1}\}$. If $\s_B$ is not of full rank, say $\s_B\ket{b}=0$ for some nonzero $\ket{b}$, then $C_j\bma 0_R \\ \ket{b} \ema=0$. It implies that $\r_B$ is singular, which is a contradiction with the fact that $\r_B>0$. So $\s_B>0$, and $\rank\s_B=N-R$. 
Next, the assumption that $\r$ is not 1-distillable implies that $\rank\s\ge \rank\s_B=N-R$ by Lemma \ref{le:2xn=NPT}.
Recall that the blocks $C_i$ are $(N+1)\times
N$ matrices, and thus $\rank\s=\rank C'\le N-R+1$. We have two cases namely (i) $\rank\s=N-R$, and (ii) $\rank\s=N-R+1$.

(i) Suppose $\rank\s=N-R$. Because $\r$ is not 1-distillable and $ \rank\s_B=N-R$, we see that $\s$ is PPT state by Lemma \ref{le:2xn=NPT} (ii). And Lemma \ref{le:2xn=NPT} (iv) shows that $\s$ is the convex sum of exactly $N-R$ pure product states. Up to local equivalence, we may assume that every $C_{i3}$ is diagonal with the last zero row for $i>0$. By the same reason, we may assume that the first entry of $C_{13}$ is nonzero and that of $C_{i3}$ for $i>1$ is zero. Hence $\ket{1}^{\perp}\otimes\ket{R}\subseteq\ker\r$. So $\r$ is 1-distillable by Lemma \ref{le:irreducible}. It is a contradiction with the assumption, and we have completed the proof in case (i). 

(ii) Suppose $\rank\s=N-R+1$. The fact $\rank\s_B=N-R$ and Eq. \eqref{eq:sigma'} imply that 
\begin{eqnarray}
\label{eq:1<m<M}    
1<m:=\rank\s_A<M.
\end{eqnarray}
Up to local equivalence, we may assume that the blocks $C_{i3}=0$ for $i>m$, and the blocks $C_{13},...,C_{m3}$ are linearly independent. If a nontrivial linear combination of the blocks $C_{13},...,C_{m3}$ is of rank $r_1<N-R$, then there is a vector $\ket{a'}\in\cH_A$ such that $\bra{a'}\r\ket{a'}$
has rank $R+r_1\in(R,N)$. It is a contradiction with the definition of $R$, namely $R$ is the maximum of $r_A=\rank\bra{a}\r\ket{a}$ taken over all $\ket{a}\in\cH_A$ such
that $r_a<N$. Hence, any nontrivial linear combination of the blocks $C_{13},...,C_{m3}$ is of full rank $N-R$. Using Eq.\eqref{eq:sigma'}, we obtain that $\s$ is an $m\times (N-R)$ state of rank $N-R+1$, and the kernel of $\s$ has no product vector. It occurs only if $m=2$ due to \eqref{eq:1<m<M} and Lemma \ref{le:maximumdim}. Hence $C_{i3}=0$ for $i>2$. Because $\r$ is not 1-distillable, we have $C^{\dg}_{i2}C_{13}=C^{\dg}_{i2}C_{23}=0$. Hence, $C_{i2}=0$ for $i>2$. 

For convenience, we summary so far the facts over $\r$, i.e.,
\begin{eqnarray}
\label{eq:rho=summary}
\r=[C_0,C_1,...,C_{M-1}]^\dg [C_0,C_1,...,C_{M-1}],
\end{eqnarray} 
where
 \begin{equation}
\label{eq:C0=} 
C_0=\left[\begin{array}{cc}
    I_R & 0
 \\ 0 & 0 \end{array}\right], \quad
C_1=\left[\begin{array}{cc}
    C_{10} & 0
 \\ C_{12} & C_{13} \end{array}\right],
\quad
C_2=\left[\begin{array}{cc}
    C_{20} & 0
 \\ C_{22} & C_{23} \end{array}\right],
\quad
C_i=\left[\begin{array}{cc}
    C_{i0} & 0
 \\ 0 & 0 \end{array}\right],~ M>i>2, 
 \end{equation}
where the blocks $C_{i0}$ are $R\times R$ matrices, $\rank [C_{13} \quad C_{23}]=N-R+1$, the blocks $C_{13}$ and $C_{23}$ are linearly independent $(N-R+1)\times(N-R)$ blocks, any nontrivial linear combination of $C_{13}$ and $C_{23}$ is of full rank $N-R$.

We can project the state $\r$ in \eqref{eq:rho=summary} onto $\r_1=[I_R,C_{30},...,C_{M-1,0}]^\dg [I_R,C_{30},...,C_{M-1,0}]$. One may verify that $\r_1$ and $(\r_1)_B$
both rank $R$. Because $\r$ is not 1-distillable, Lemma \ref{le:2xn=NPT} shows that $\r_1$ is PPT and the convex sum of exactly $R$ pure product states. Up to local equivalence, we may assume that every $R\times R$ matrix $C_{i0}$ in \eqref{eq:C0=} for $i>2$ is diagonal. By simultaneously permuting their diagonal
entries (if necessary) we may assume that
 \bea
 C_{i0}=\lambda_{i1} I_{l_1}\oplus\cdots\oplus \lambda_{ik} I_{l_k},
 \quad i>2; \quad l_1 + \cdots + l_k=R,
 \eea
and that whenever $r\ne s$ there exists an $i$ such that
$\lambda_{ir}\ne\lambda_{is}$. Recall that $\rank\r_A=M>3$, we have $k>1$. Using Lemma \ref{le:kxk} one can show that the matrices $C_{10}$ and $C_{20}$ are direct sums
 \bea
 \label{eq:C10}
C_{10}= E_{1}\oplus\cdots\oplus E_{k},\quad C_{20}=
F_{1}\oplus\cdots\oplus F_{k},
 \eea
where $E_{i}$ and $F_{i}$ are square blocks of size $l_i$, and we
may assume the $E_{i}$ are lower triangular.
Let us write
 \bea
C_{i2}=\left[
          \begin{array}{c}
            C_{i21} \\
            C_{i22} \\
          \end{array}
        \right],\quad
C_{i3}=\left[
          \begin{array}{c}
            C_{i31} \\
            C_{i32} \\
          \end{array}
        \right],\quad i=1,2;
 \eea
where $C_{i22}$ and $C_{i32}$ are row-vectors. By multiplying $[C_0,C_1,...,C_{M-1}]$ of \eqref{eq:rho=summary} on
the left hand side by a unitary matrix $I_R\oplus U$, we may assume
that $C_{132}=0$. Recall that $\rank C_{13}=N-R$ below \eqref{eq:C0=}. So the square matrix $C_{131}$ is invertible, and we may assume that $C_{121}=0$ up to local equivalence. We split the row-vector $C_{122}$ into $k$ pieces
$w_1,\ldots,w_k$ of lengths $l_1,\ldots,l_k$, respectively. To
summarize, the matrices $C_j$, $j>0$ in \eqref{eq:rho=summary} have the form
 \begin{eqnarray}
&& 
C_0=\left[\begin{array}{cc}
    I_R & 0
 \\ 0 & 0 
 \\ 0 & 0 \end{array}\right],
\quad
C_1=\left[\begin{array}{cc}
    \left[\begin{array}{ccc}
       E_1 & & \\
       & \ddots & \\
       & & E_k \\
      \end{array}\right]
    & 0
 \\ 0 & C_{131}
 \\
 \left[
 \begin{array}{ccc} w_1,\ldots,w_k \end{array} \right]
    & 0
 \end{array}\right],
 \quad
C_2=\left[\begin{array}{cc}
     \left[\begin{array}{ccc}
       F_1 & & \\
       & \ddots & \\
       & & F_k \\
      \end{array}\right]
    & 0
 \\ C_{221} & C_{231}
 \\ C_{222} & C_{232} \end{array}\right],
 \notag\\ &&
C_j=\left[\begin{array}{cc}
    \left[\begin{array}{ccc}
       \lambda_{j1} I_{l_1} & & \\
       & \ddots & \\
       & & \lambda_{jk} I_{l_k} \\
      \end{array}\right]
    & 0
 \\ 0 & 0
 \\ 0 & 0 \end{array}\right],\quad M>j>2,
 \quad
 k>1.
\label{eq:Cj} 
 \end{eqnarray} 
Recall that the hypothesis says that $\cR(\r)$ has no product vector. Eq. \eqref{eq:rho=summary} implies that in \eqref{eq:Cj}, each $l_i>1$ and at least one $w_i\ne0$.
As we can simultaneously permute the first $k$ diagonal blocks of
the matrices $C_j$, we may assume that $w_1\ne0$. Let
$w_1=(a_1,\ldots,a_n,0,\ldots,0)$ where $a_j\ne0$ and let us
partition
 \bea
E_1= \left[\begin{array}{cc}
             E_{10} & E_{11} \\
             E_{12} & E_{13} \\
           \end{array}\right],
 \eea
where $E_{10}$ is of size $n\times n$. Recal the asumption below \eqref{eq:C10}, $E_1$ is a lower triangular matrix, thus $E_{11}=0$. By the same reason and by using Lemma \ref{le:2xn=NPT} (iv),
we may assume that the state
$[I_{l_1-n}~E_{13}]^\dg[I_{l_1-n}~E_{13}]$ is PPT and so the matrix
$E_{13}$ must be normal. Since $E_{13}$ is also lower triangular below \eqref{eq:C10}, it
must be a diagonal matrix.

By adding a suitable multiple of $C_0$ on $C_1$ of $\r$, we may assume that $E_{13}$ is invertible. By adding suitable multiples of columns $n+1,...,l_1$ of $C_1$ on columns $1,...,n$ of $C_1$, we can assume that $E_{12}=0$ (Note that such local equivalence also changes $F_1,C_{22},\lambda_{j1}I_{l_1}$ for every $j>2$).
We conclude that except for $a_n$ and the last entry of $E_{10}$ all
other entries of the $n$th column of $C_1$ are 0. By subtracting
from $C_1$ a scalar multiple of $C_0$, we may assume that the last
entry of $E_{10}$ is 0. Now $a_n$ is the only nonzero entry in the
$n$th column of $C_1$.

We can choose an index $i>2$ such that $\lambda_{i1}\ne\lambda_{ik}$. By
replacing $C_i$ with $C_i-\lambda_{i1}C_0$ using local equivalence, the $n$th
column of $C_i$ becomes 0.
By Lemma \ref{le:2xn=NPT} (iv), we may assume that
the state
$[E_k,(\lambda_{ik}-\lambda_{i1})I_{l_k}]^\dg[E_k,(\lambda_{ik}-\lambda_{i1})I_{l_k}]$
of rank $l_k$ is PPT. Since its B-local rank is also $l_k$,
the matrix $E_k$ must be normal. As it is also lower triangular, it
must be a diagonal matrix. Because $\r$ is not 1-distillable, using Lemma \ref{le:kxk} we can further assume that
 \bea
 E_k = \m_{1} I_{n_1}\oplus\cdots\oplus \m_{s} I_{n_{s}}, \quad
 F_k = G_{n_1}\oplus\cdots\oplus G_{n_{s}};
 \quad n_1 + \cdots + n_{s} = l_k,
 \eea
with $G_j$ upper triangular of order $n_j$ for each $j$, $\m_i\ne \m_j$ and $s\ge1$. Then the
$R$th row of $C$ is a product vector, which shows that $\cR(\r)$ contains a product vector.
This is a contradiction with the hypothesis, and the proof is completed.  
\end{proof}

Unfortunately, we are not able to distill the $3\times N$ B-irreducible NPT state of rank $N+1$. If this case was proven, then Lemma \ref{le:reducible} and Theorem \ref{thm:MxNrank(N+1)NPT} would show that every $M\times N$ NPT state of rank $N+1$ is 1-distillable. Formally, we state it as follows.

\begin{lemma}
\label{le:equivalence}
Every $M\times N$ NPT state of rank $N+1$ is 1-distillable, if and only if every $3\times N$ B-irreducible NPT state of rank $N+1$ is 1-distillable.
\end{lemma}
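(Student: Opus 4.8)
The plan is to prove the two implications separately; essentially all of the content is a case analysis that peels off the dimensions already handled by earlier results and reduces what remains to the $3\times N$ B-irreducible case. The ``only if'' direction is immediate, since a $3\times N$ B-irreducible NPT state of rank $N+1$ is in particular an $M\times N$ NPT state of rank $N+1$ (with $M=3$), so if all of the latter are 1-distillable then so are all of the former.

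For the ``if'' direction, assume every $3\times N$ B-irreducible NPT state of rank $N+1$ is 1-distillable, and let $\r$ be an arbitrary $M\times N$ NPT state of rank $N+1$. First I would dispose of the easy dimensions. If $M=1$ then $\rank\r\le N<N+1$, so this case is vacuous; if $M=2$ then $\r$ is 1-distillable by Lemma \ref{le:2xn=NPT}(i); and if $M>N$ then $\rank\r=N+1\le M=\max\{M,N\}$, so $\r$ is 1-distillable by Lemma \ref{le:2xn=NPT}(ii). Hence we may assume $3\le M\le N$. Next I would invoke Lemma \ref{le:reducible}(ii): for $M\le N$ the 1-distillability of all $M\times N$ NPT states of rank $N+1$ is equivalent to that of all B-irreducible such states, so it suffices to treat the case that $\r$ is B-irreducible.

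It then remains to split on the value of $M$. If $M>3$, then $N\ge M>3$, so $M,N>3$ and Theorem \ref{thm:MxNrank(N+1)NPT} applies directly: $\r$ is 1-distillable. If $M=3$, then $\r$ is a $3\times N$ B-irreducible NPT state of rank $N+1$, hence 1-distillable by the standing hypothesis. Combining all cases gives that every $M\times N$ NPT state of rank $N+1$ is 1-distillable, completing the ``if'' direction.

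No step here is genuinely difficult: the statement is an assembly of Lemma \ref{le:2xn=NPT}, Lemma \ref{le:reducible}(ii) and Theorem \ref{thm:MxNrank(N+1)NPT}. The only point deserving care is making the case division exhaustive --- in particular checking that the $M>N$ regime is swallowed by Lemma \ref{le:2xn=NPT}(ii) (so that ``rank $N+1$'' with $M>N$ is not a missing case), and that the B-irreducible reduction of Lemma \ref{le:reducible}(ii) is being used uniformly over the whole range $3\le M\le N$ rather than for a single fixed pair $(M,N)$.
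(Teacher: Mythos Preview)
Your proof is correct and follows essentially the same route as the paper: invoke Theorem~\ref{thm:MxNrank(N+1)NPT} for $M>3$, the hypothesis for $M=3$, and then Lemma~\ref{le:reducible}(ii) to pass from B-irreducible states to all states. Your version is in fact more careful than the paper's, which omits the explicit treatment of the cases $M\le 2$ and $M>N$ that you dispatch via Lemma~\ref{le:2xn=NPT}.
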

\begin{proof}
It suffices to prove the "if" part. We assume that every $3\times N$ B-irreducible NPT state of rank $N+1$ is 1-distillable. Combined with Theorem \ref{thm:MxNrank(N+1)NPT} we know that every $M\times N$ B-irreducible NPT state of rank $N+1$ is 1-distillable. Using Lemma \ref{le:reducible}(ii), we obtain every $M\times N$ NPT state of rank $N+1$ is 1-distillable.
\end{proof}

\section{conclusion}
\label{sec:con}
We have shown that a bipartite NPT state of Schmidt three is 1-undistillable under LOCC. The distillablility of bipartite NPT states of Schmidt rank at least four remains an open problem. We also have shown that an $M\times N$ B-irreducible NPT state of rank $\max\{M,N\}+1$ is 1-distillable when $M,N>3$. An open problem from this paper is to distill $3\times N$ B-irreducible NPT states of rank $N+1$ or larger rank. The first case is $N=4$, because every rank-four NPT state is known to be 1-distillable. A possible approach for the problems is to consider its irreducibility and product states in its range and kernel.

\section*{Acknowledgments}

Authors were supported by the NNSF of China (Grant No. 11871089), and the Fundamental Research Funds for the Central Universities(Grant No. ZG216S2005).

\bibliographystyle{unsrt}

\bibliography{distill}

\end{document}